\newtheorem{theorem}{Theorem}[section]
\newtheorem{lemma}{Lemma}[section]
\newtheorem{assumption}{Assumption}[section]
\theoremstyle{definition}
\newtheorem{definition}{Definition}[section]
\numberwithin{equation}{section}
\def\@biblabel#1{}
\begin{document}

\title{\normalsize \bf THE THEORETICAL PRICE OF A SHARE-BASED PAYMENT WITH PERFORMANCE CONDITIONS AND IMPLICATIONS FOR THE CURRENT ACCOUNTING STANDARDS}

\author{\footnotesize MASAHIRO FUJIMOTO}
\affil{\footnotesize \textit{Fujimoto Financial Quantitative Research}\\
\textit{fujimoto.ffqr@gmail.com}}
\date{\footnotesize \today}

\maketitle

\begin{abstract}
\footnotesize Although the growth of share-based payments with performance conditions (hereafter, \textit{SPPC}) is prominent today, the theoretical price of SPPC has not been sufficiently studied. Reflecting such a situation, the current accounting standards for share-based payments issued in 2004 have had many problems. This paper develops a theoretical SPPC price model with a framework for a marginal utility-based price, which previous studies proposed is the price of contingent claims in an incomplete market. This paper's contribution is fivefold. First, we restricted the stochastic process to a certain class to demonstrate how to consistently change all variables' probability distributions, which affect the SPPC payoff. Second, we explicitly indicated not only the stochastic processes of the stock price process and performance variables under the changed probability, but also how the changes in the performance variables' drift coefficients related to stock betas. Third, we proposed a convenient model in application that uses only a few parameters. Fourth, we provided a method to estimate the parameters and improve the estimation of both the price and parameters. Fifth, we illustrated the problems in current accounting standards and indicated how the theoretical price model can significantly improve them.

{\flushleft{{\bf Keywords:} Share-based payment; performance-based vesting condition; employee stock options; Statement of Financial Accounting Standards; theoretical price; fair value; incomplete market.}}
\end{abstract}

\section{Introduction }

Share-based payments with performance conditions (hereafter, \textit{SPPC}) have experienced prominent growth. According to Pay Governance LLC,\footnote{http://paygovernance.com/considering-performance-stock-options}  “Today, more than 80\% of S\&P 500 companies use a variety of LTI performance plans.” The SPPC provides some benefits, such as stocks or stock options, which are vested when such performance variables as net profit or the earnings per share achieve predetermined goals.

However, unlike the standard stock option, the SPPC's theoretical price has not been sufficiently studied. In fact, a May 2018 Google Scholar search for a combination of “performance conditions” and “theoretical price” as well as “performance conditions” and “share-based payment” and “theoretical value” produced virtually no results.

Reflecting such a theoretically unresolved state, the current accounting standards on share-based payments, issued in 2004, are as follows: The fair value under the current standards is measured without performance conditions on a grant date, and whether the fair value is recognized as compensation cost is left to whether the company believes achieving the goal is probable. When judged as probable, the fair value is allocated as compensation cost over the relevant periods, and if the goal is ultimately impossible, the already recognized costs are immediately reversed. Alternatively, when this is judged as improbable, no expenses are recognized, and when the goal is ultimately possible, the unrecognized costs are immediately recognized.

Current accounting standards have various problems due to the uncertainty of cost recognition, and as the recognition at the grant date is left to the company's judgement. Such issues include volatile compensation cost, over-recognition of compensation cost, inconsistency with accounting objectives, a distortion of the company's optimal selection of an award of equity instruments and the significant volatility of a difficult project's compensation cost. These problems can substantially improve if we use this paper's theoretical pricing as the fair value and always recognize compensation cost on the grant date; we will discuss this in detail in Section 5.

The SPPC's payoff depends on performance variables as well as stock prices. As performance variables cannot be traded on the market, the SPPC's theoretical price is the theoretical price of a contingent claim in an incomplete market. This paper adopts a marginal utility-based price, which previous studies posit is the theoretical price of a contingent claim in an incomplete market (Davis 1997, Hugonnier \textit{et al.} 2005). It is a price at which an investor—who maximizes his or her expected utility by only trading with a money market account and tradable stocks—cannot improve the expected utility by buying or selling the contingent claim.

Previous studies' primary results that relate to the theme of this paper are as follows:
\begin{arabiclist}
\item The marginal utility-based price is a general concept for both complete and incomplete markets, as it coincides with arbitrage-free pricing in a complete market.
\item The marginal utility-based price is the expected value of the contingent claim payoff's present value multiplied by a random variable. This random variable is an optimal solution to the dual problem associated with the expected utility maximization (Hugonnier \textit{et al.} 2005).
\item Generally, the optimal solution to the dual problem depends on the utility function and initial wealth 
\end{arabiclist}

However, prior works have not sufficiently and explicitly studied how to solve the dual problem and obtain a marginal utility-based price. This paper's contribution is fivefold.

\begin{arabiclist}
\item We restrict the stock price and performance variable's stochastic processes to a class driven by Brownian motion. With an additional assumption that is unrestrictive in its application, we explicitly solve the dual problem and reveal that the optimal solution depends neither on the utility function nor the initial wealth, and that the optimal solution is a Radon-Nikodým derivative of the new probability $\mathbb{Q}$ with respect to the original probability $\mathbb{P}$.
\item We then explicitly illustrate the stochastic stock price and performance variable processes under $\mathbb{Q}$. The processes' drift coefficients are equal to a quantity obtained by subtracting the product of the stock price's beta (which will be accurately discussed in Section 3) and the expected excess returns of stocks from the original drift coefficients. New drift coefficients of stock price coincide with the risk-free rate minus the dividend yield, which is the same conclusion as in the complete market. The new discovery in this work is that the change in the performance variable's drift coefficient relates to the stock's betas.
\item We demonstrate that with an additional assumption, which is unrestrictive in its application, we can obtain a convenient model that uses only a few parameters.
\item As the SPPC's theoretical price is the expected value under $\mathbb{Q}$, we must use the Monte Carlo method in most cases. Thus, we propose some control variables to improve the estimation accuracy of the theoretical price as well as the parameters used in the model.
\item We demonstrate that existing accounting standards are highly problematic due to the SPPC's lack of a theoretical price; we can greatly improve these standards by adopting this paper's theoretical price as the fair value.
\end{arabiclist}

The paper is organized as follows: Section 2 summarizes the previous studies for both complete and incomplete markets, to the extent necessary to analyze the SPPC's price. Section 3 restricts the stochastic process to the class driven by Brownian motion with some additional assumptions, and we derive primary theorems as the basis to calculate the price. Section 4 formulates the SPPC and derives its theoretical price. Further, we introduce a \textit{period-product}, a useful concept, and explain its use. Section 5 analyzes the problems with current accounting standards and how this paper's theoretical price model could significantly improve them. Section 6 concludes.

\section{Previous Studies' Primary Results  }
First, we explain the concept of a marginal utility-based price (Davis 1997). Suppose an investor maximizes the expected utility from the terminal wealth by investing the initial wealth in a money market account and stocks (or \textit{tradable assets}). Further, suppose the investor receives a proposal to buy or sell a contingent claim at price $p$. If the investor can improve the expected utility by buying some amount of the claims, we consider $p$ as inexpensive. Conversely, if the investor can improve the expected utility by selling some amount of the claim short, we consider $p$ as expensive. A fair price $p$ based on the investor's expected utility is a price at which the investor neither buys nor sells the claim short. If $p$ is at such a level, the marginal utility due to buying (or selling short) a small amount $q$ of the claim is equal to the marginal utility due to the decrease (or increase) of wealth invested in the tradable assets. This is the marginal utility-based price.

In a complete market, the marginal utility-based price and an arbitrage-free price coincide. The payoff of the claim itself does not affect investors' expected utility, as it can be completely replicated by trading the tradable assets. Only the difference between $p$ and the arbitrage-free price (the replication cost) $c$ matters. If $p$ is higher (or lower) than $c$, the investors can use $p-c$ (or $c-p$, respectively) to improve the expected utility. The price at which the investor cannot improve the expected utility coincides with an arbitrage-free price.

Thus, we formulate the following: One money market account and $m$ stocks exist, which are tradable on the market; we call these tradable assets. Their price processes on the filtered probability space $\left( \Omega ,\ \mathcal{F},\ {{\left( {{\mathcal{F}}_{t}} \right)}_{t\in \left[ 0,T \right]}},\ \mathbb{P} \right)$ are the adapted semi-martingale ${{S}_{0}}$ and ${{S}_{i}}\text{ }\left( 1\le i\le m \right)$, respectively. We denote by ${{D}_{i}}\text{ }\left( 1\le i\le m \right)$, an accumulated dividend process that expresses the total dividend from time $0$ to time $t$. We denote all stock price processes and total accumulated dividend processes by the ${{R}^{m}}$-valued processes $\mathbf{S}={{\left( {{S}_{1}},\cdots ,{{S}_{m}} \right)}^{\top }}$ and $\mathbf{D}={{\left( {{D}_{1}},\cdots ,{{D}_{m}} \right)}^{\top }}$, respectively. Further, $\mathcal{F}={{\mathcal{F}}_{T}}$ and ${{\mathcal{F}}_{t}}$ satisfies the usual conditions (right continuous, and ${{\mathcal{F}}_{0}}$ contains all null sets of $\mathcal{F}$).

We introduce the following definitions:
\begin{definition}[Trading Strategy]
The stochastic processes ${{H}_{0}}$, $\mathbf{H}={{\left( {{H}_{1}},\ldots ,{{H}_{m}} \right)}^{\top }}$ are predictable processes, respectively representing the holding amounts of a money-market account and stocks; thus, we call $\left( {{H}_{0}},{{\mathbf{H}}^{\top }} \right)$ or $\mathbf{H}$ a \textit{trading strategy} (or a \textit{strategy}, for brevity).
\end{definition}
\begin{definition}[Admissible Strategy]
A set of wealth processes that can be realized by some strategy with the initial wealth $x$ is

\begin{equation}
\mathcal{X}\left( x \right):=\left\{ X\ge 0;{{X}_{t}}=x+\int_{0}^{t}{{{H}_{0}}_{u}}d{{S}_{0}}_{u}+\int_{0}^{t}{\mathbf{H}_{u}^{\top }}\left( d{{\mathbf{S}}_{u}}+d{{\mathbf{D}}_{u}} \right) \right\}.
\end{equation}

\leftline{We call $X\in \mathcal{X}\left( x \right)$ or $\mathbf{H}$ generating $X$ an \textit{admissible strategy}.}
\end{definition}

When we use ${{S}_{0}}_{t}$ as a numeraire, we denote the relative prices for $1\le i\le m$ by $\tilde{S}{{}_{i}}_{t}:={{{S}_{i}}_{t}}/{{{S}_{0}}_{t}}\;$, the relative accumulated dividend processes by $\tilde{D}{{}_{i}}_{t}:={{{D}_{i}}_{t}}/{{{S}_{0}}_{t}}\;$, and a set of relative wealth processes by $\tilde{\mathcal{X}}\left( x \right):={\mathcal{X}\left( x \right)}/{{{S}_{0}}}\;$. $\mathbf{\tilde{S}}=\left( {{{\tilde{S}}}_{1}},\cdots ,{{{\tilde{S}}}_{m}} \right)$ and $\mathbf{\tilde{D}}=\left( {{{\tilde{D}}}_{1}},\cdots ,{{{\tilde{D}}}_{m}} \right)$. By definition, $\tilde{S}{{}_{0}}_{t}=1$. Further,
\begin{equation}
	\tilde{\mathcal{X}}\left( x \right):=\left\{ \tilde{X}\ge 0;{{{\tilde{X}}}_{t}}=x+\int_{0}^{t}{{{\mathbf{H}}_{u}}\left( d{{{\mathbf{\tilde{S}}}}_{u}}+d{{{\mathbf{\tilde{D}}}}_{u}} \right)} \right\}.	
\end{equation}

{\flushleft{Clearly, $X\in \mathcal{X}\left( x \right)$ and $\tilde{X}\in \tilde{\mathcal{X}}\left( x \right)$ are self-financing portfolios; further, $\mathcal{X}\left( x \right)=x\mathcal{X}\left( 1 \right)$ and $\tilde{\mathcal{X}}\left( x \right)=x\tilde{\mathcal{X}}\left( 1 \right)$.}}

\begin{definition}[Maximal Strategy]
We call $X\in \mathcal{X}\left( x \right)$ a \textit{maximal strategy} if its terminal value cannot be dominated by that of any other strategy in $\mathcal{X}\left( x \right)$, namely, if ${X}'\in \mathcal{X}\left( x \right)$ and ${{X}_{T}}\le {{{X}'}_{T}}$ imply ${X}'=X$.
\end{definition}

\begin{definition}[Acceptable Strategy]
We call a strategy $X$ an \textit{acceptable strategy} if it has a decomposition of the form $X={X}'-{X}''$, where ${X}'$ is an admissible strategy and ${X}''$ is a maximal strategy.
\end{definition}

{\flushleft{For details on maximal and acceptable strategies, see Delbaen \& Schachermayer (1997).}}

We call a probability measure $\mathbb{Q}$ an \textit{equivalent local martingale measure} if it is equivalent to $\mathbb{P}$ and if every $\tilde{X}\in \tilde{\mathcal{X}}\left( 1 \right)$ is a local martingale under $\mathbb{Q}$. We denote by $\mathcal{M}$ the family of all such measures.
\begin{definition}[Set of the Equivalent Local Martingale Measure]
\begin{equation}
	\mathcal{M}:=\left\{ \mathbb{Q}\approx \mathbb{P};\text{every }\tilde{X}\in \tilde{\mathcal{X}}\left( 1 \right)\text{ is a local martingale under }\mathbb{Q} \right\}.	
\end{equation}
\end{definition}

We denote the investor's utility from the terminal wealth ${{X}_{T}}>0$ by the utility function $U:\left( 0,\infty  \right)\to R$. We suppose the investor has some initial wealth $x$ and trades the tradable assets to maximize the expected utility from the terminal wealth. The maximal expected utility for this investor is given by
\begin{equation}
\label{ZEqnNum462184}
	u\left( x \right):=\underset{\tilde{X}\in \tilde{\mathcal{X}}\left( 1 \right)}{\mathop{\sup }}\,\ U\left( x{{S}_{0}}_{T}{{{\tilde{X}}}_{T}} \right).	
\end{equation}

\leftline{We call (2.4) a {\textit{primary problem.}}}

We then define the {\textit{dual problem}} associated with (2.4). The objective function of the dual problem is a conjugate function of $U$ and we denote it by $V:\left( 0,\infty  \right)\to R$:
\begin{equation}
	V\left( y \right):=\underset{x>0}{\mathop{\sup }}\,\ \left( U\left( x \right)-xy \right).	
\end{equation}

\leftline{The constraint set of the dual problem is}

\begin{align}
\tilde{\mathcal{Y}}\left( y \right):=\left\{ \tilde{Y}\ge 0;{{{\tilde{Y}}}_{0}}=y\text{ and }\!\!~\!\!\text{  }\tilde{X}\tilde{Y}={{\left( {{{\tilde{X}}}_{t}}{{{\tilde{Y}}}_{t}} \right)}_{0\le t\le T}}\text{ is a super martingale}
\quad
\right. 
\notag\\
\left. 
\text{ for every }\tilde{X}\in \tilde{\mathcal{X}}\left( 1 \right) \right\}.
\end{align}

\leftline{The dual problem is a minimization problem:}
\begin{equation}
	v\left( y \right):=\underset{\tilde{Y}\in \tilde{\mathcal{Y}}\left( 1 \right)}{\mathop{\inf }}\,\ V\left( y{{{\tilde{Y}}}_{T}}S_{0T}^{-1} \right).	
\end{equation}

We now list the assumptions we will use.
\begin{assumption}
\footnote{Delbaen \& Schachermayer (1994) proved that Assumption 2.1 and “no free lunch with vanishing risk" are equivalent conditions.}
\label{_Ref505931248}
\begin{equation}
	\mathcal{M}\ne \varnothing.
\end{equation}
\end{assumption}

\begin{assumption}
\label{_Ref505931245}
\begin{arabiclist}
\item The utility fuinction $U$ is a strictly increasing, strictly concave, and continuously differentiable function.
\item Further, $U$ satisfies the Inada conditions:
\begin{equation}
\label{ZEqnNum857160}
	\underset{x\to \infty }{\mathop{lim}}\,\ {U}'\left( x \right)=0,\ \underset{x\to 0}{\mathop{lim}}\,\ {U}'\left( x \right)=\infty .	
\end{equation}
\end{arabiclist}
\end{assumption}

\begin{assumption}
\footnote{Kramkov \& Schachermayer (1999) used the investor's utility function of the terminal relative wealth ${{\tilde{X}}_{T}}$. As a result, they formulated the dual problem as ${{\inf }_{\tilde{Y}\in \tilde{\mathcal{Y}}\left( 1 \right)}}E\left[ V\left( y\tilde{Y} \right) \right]$. Further, they use the property ${{\sup }_{h\in D\left( 1 \right)}}E\left[ h \right]\le 1$ (with the variables $h$, $D\left( 1 \right)$, and $C\left( 1 \right)$ as described below as the variables used in their paper) for the proof of Lemmas 3.4 and 3.7 in their paper. This paper used the investor's utility function of the terminal nominal wealth ${{X}_{T}}$. As a result, the dual problem changes to ${{\inf }_{\tilde{Y}\in \tilde{\mathcal{Y}}\left( 1 \right)}}E\left[ V\left( y\tilde{Y}S_{0T}^{-1} \right) \right]$. Assumption 3 is necessary to play the same role as ${{\sup }_{h\in D\left( 1 \right)}}E\left[ h \right]\le 1$ above. In their paper, ${{\sup }_{h\in D\left( 1 \right)}}E\left[ h \right]\le 1$ is not an assumption, but the result of $C\left( 1 \right)$ containing a constant process 1. Alternatively, this paper's Assumption 2.3 cannot be derived from other assumptions, so this needs to be an additional assumption.}

\begin{equation}
\label{_Ref505931245}
	\underset{\tilde{Y}\in \mathcal{Y}\left( 1 \right)}{\mathop{\sup }}\,\ E\left[ {{{\tilde{Y}}}_{T}}S_{0T}^{-1} \right]<\infty . 	
\end{equation}
\end{assumption}

\begin{assumption}
\label{_Ref505931250}
\begin{equation}
	u\left( x \right)<\infty {\text{ for some }}x>0.
\end{equation}
\end{assumption}

The results of previous studies related to this paper's theme involve the following Lemmas 2.1, 2.2, and 2.3.
\begin{lemma}[An Incomplete Market]
\label{_Ref509074698}
Under Assumptions {\upshape 2.1, 2.2, 2.3}, and {\upshape 2.4}, the following claims hold:
\begin{arabiclist}
\item A unique optimal solution ${{Y}^{*}}\left( y \right)\in \mathcal{Y}\left( 1 \right)$ to the dual problem {\upshape(2.7)} exists for any $y>0$.
\item If $\underset{x\to \infty }{\mathop{lim}}\,\ x{{U}'\left( x \right)}/{U\left( x \right)}\;<1$, then a unique optimal solution ${{\tilde{X}}^{*}}\left( x \right)\in \mathcal{X}\left( 1 \right)$ to the primary problem {\upshape(2.4)} exists for any $x>0$. If $y={u}'\left( x \right)$, we have
\begin{equation}
	{U}'\left( {{{x\tilde{X}}}^{*}}\left( x \right){{S}_{0}}_{T} \right){{S}_{0}}_{T}=y{{\tilde{Y}}^{*}}\left( y \right).	
\end{equation}
\end{arabiclist}
\end{lemma}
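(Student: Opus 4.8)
The plan is to recognize this statement as the abstract utility-maximization duality theorem of Kramkov \& Schachermayer (1999), transported from their relative-wealth formulation to the present nominal-wealth one, and to reduce the dynamic problems (2.4) and (2.7) to static optimizations over terminal random variables. The reduction rests on the polar relationship already encoded in the definition of $\tilde{\mathcal{Y}}\left(1\right)$: for every $\tilde{X}\in\tilde{\mathcal{X}}\left(1\right)$ and $\tilde{Y}\in\tilde{\mathcal{Y}}\left(1\right)$ the product $\tilde{X}\tilde{Y}$ is a supermartingale with $\tilde{X}_{0}\tilde{Y}_{0}=1$, so that $E\left[\tilde{X}_{T}\tilde{Y}_{T}\right]\le 1$. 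Writing $h:=S_{0T}\tilde{X}_{T}$ for the nominal terminal wealth and $g:=\tilde{Y}_{T}S_{0T}^{-1}$ for the dual terminal variable, this becomes $E\left[hg\right]\le 1$, and the primal and dual objectives turn into the static problems $\sup_{h}E\left[U\left(xh\right)\right]$ and $\inf_{g}E\left[V\left(yg\right)\right]$ taken over the corresponding polar sets.

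To establish claim 1, first note that strict concavity of $U$ (Assumption 2.2) makes its conjugate $V$ finite, strictly convex, and strictly decreasing, with the Inada conditions (2.9) passing to $V'\left(0^{+}\right)=-\infty$ and $V'\left(\infty\right)=0$. Assumption 2.1 makes $\tilde{\mathcal{Y}}\left(1\right)$ nonempty, since the density process of any measure in $\mathcal{M}$ belongs to it, and Assumption 2.4 renders the dual infimum in (2.7) finite, so a minimizing sequence $\tilde{Y}^{n}\in\tilde{\mathcal{Y}}\left(1\right)$ exists. Assumption 2.3, i.e. (2.10), supplies the uniform bound $\sup_{n}E\left[\tilde{Y}^{n}_{T}S_{0T}^{-1}\right]<\infty$; this $L^{1}$-boundedness is exactly the role that Assumption 2.3 plays in place of the inequality $\sup_{h}E\left[h\right]\le 1$ used by Kramkov \& Schachermayer, as the footnote to that assumption indicates. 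By the Koml\'{o}s-type convex-compactness lemma of Delbaen \& Schachermayer I would pass to forward convex combinations $\hat{Y}^{n}\in\mathrm{conv}\left(\tilde{Y}^{n},\tilde{Y}^{n+1},\dots\right)$ whose terminal values converge almost surely to some $Y^{*}_{T}$. Since the supermartingale characterization defining $\tilde{\mathcal{Y}}\left(1\right)$ is stable under convex combinations and, via Fatou's lemma, under almost-sure limits, the limit is the terminal value of some $\tilde{Y}^{*}\in\tilde{\mathcal{Y}}\left(1\right)$; convexity of $V$ together with Fatou then yields lower semicontinuity of the objective, so $\tilde{Y}^{*}$ attains the infimum, while strict convexity of $V$ makes the minimizer unique.

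For claim 2 I would bring in the additional hypothesis that the asymptotic elasticity $\limsup_{x\to\infty}xU'\left(x\right)/U\left(x\right)<1$. This is precisely the condition that controls the growth of $V$ near the origin, excludes a duality gap, and yields that $v$ is finite, continuously differentiable, and conjugate to $u$, with $u\left(x\right)=\inf_{y>0}\left(v\left(y\right)+xy\right)$ attained at $y=u'\left(x\right)$. Setting $I:=\left(U'\right)^{-1}$ and $y=u'\left(x\right)$, I would introduce the candidate nominal terminal wealth $\xi:=I\left(y\tilde{Y}^{*}_{T}S_{0T}^{-1}\right)$, so that $U'\left(\xi\right)S_{0T}=y\tilde{Y}^{*}_{T}$ holds by construction; identifying $\xi$ with $xS_{0T}\tilde{X}^{*}_{T}$ then gives the asserted identity (2.12). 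The remaining task is to verify that the relative wealth $\xi S_{0T}^{-1}$ is attainable with initial capital $x$, i.e. that it is the terminal value of some process in $\tilde{\mathcal{X}}\left(x\right)$ meeting its budget with equality, $E\left[\xi S_{0T}^{-1}\tilde{Y}^{*}_{T}\right]=x$; this follows from the dual optimality of $\tilde{Y}^{*}$ and the bipolar theorem identifying $\tilde{\mathcal{X}}\left(x\right)$ with the bipolar of the dual constraint set.

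The main obstacle is the attainment and the closing of the duality gap in claim 2 rather than the dual existence in claim 1. Showing that the marginal-utility candidate $\xi S_{0T}^{-1}$ genuinely lies in $\tilde{\mathcal{X}}\left(x\right)$, equivalently that the primal supremum is attained and equals the dual value, is exactly where the asymptotic-elasticity condition is indispensable, since without it $V$ can grow too fast and the candidate need be neither integrable nor attainable. A secondary care, peculiar to this nominal-wealth formulation, is to carry the numeraire factor $S_{0T}^{-1}$ consistently through every estimate; Assumption 2.3, i.e. (2.10), is designed to absorb it so that the Kramkov--Schachermayer machinery transfers essentially verbatim.
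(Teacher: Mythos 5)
Your proposal is correct and takes essentially the same route as the paper: the paper's proof of this lemma is simply a citation of Kramkov \& Schachermayer (1999), Theorems 2.1 and 2.2, and your sketch is a faithful unpacking of exactly that machinery (Koml\'{o}s-type convex compactness and Fatou-closedness of the dual domain for existence and uniqueness, asymptotic elasticity together with conjugacy and the bipolar relations for primal attainment), including the correct identification---also made in the paper's footnote to Assumption 2.3---that the bound $\sup_{\tilde{Y}\in\mathcal{Y}\left(1\right)}E\left[\tilde{Y}_{T}S_{0T}^{-1}\right]<\infty$ substitutes for the inequality $\sup_{h\in D\left(1\right)}E\left[h\right]\le 1$ when transferring from the relative-wealth to the nominal-wealth formulation. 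No gaps to report.
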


\begin{proof}
See Kramkov \& Schachermayer's (1999) Theorems 2.1 and 2.2. 		        
\end{proof}
\begin{lemma}[A Complete Market]
\label{_Ref506080083}
Under Assumptions {\upshape 2.1, 2.2, 2.3}, and {\upshape 2.4}, if a market is complete, namely, if $\mathcal{M}$ is a singleton, a unique optimal solution to the primary problem {\upshape(2.1)} exists for any $x>0$. If $y={u}'\left( x \right)$, we have
\begin{equation}
	{U}'\left( {{{x\tilde{X}}}^{*}}\left( x \right) {{S}_{0}}_{T}\right){{S}_{0}}_{T}=y\frac{d\mathbb{Q}}{d\mathbb{P}}	
\end{equation}

\leftline{where ${d\mathbb{Q}}/{d\mathbb{P}}\;$ is a Radon-Nikodým derivative of $\mathbb{Q}$ with respect to $\mathbb{P}$, where $\mathbb{Q}$ is a }
\leftline{unique element of $\mathcal{M}$.}
\end{lemma}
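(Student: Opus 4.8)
The plan is to solve the utility maximization directly by the martingale (replication) method, which in a complete market is available precisely because $\mathcal{M}=\{\mathbb{Q}\}$ forces every suitable terminal claim to be attainable, and then to read off formula (2.13) from the resulting first-order condition. First I would record the static reformulation of the primary problem. Writing $Z_t:=E_{\mathbb{P}}[\,d\mathbb{Q}/d\mathbb{P}\mid\mathcal{F}_t\,]$ for the density process of the unique element $\mathbb{Q}\in\mathcal{M}$, the Bayes rule shows that a nonnegative process $\tilde{X}$ is a $\mathbb{Q}$-supermartingale if and only if $\tilde{X}Z$ is a $\mathbb{P}$-supermartingale. Since every $\tilde{X}\in\tilde{\mathcal{X}}(1)$ is a nonnegative $\mathbb{Q}$-local martingale and hence a $\mathbb{Q}$-supermartingale, this simultaneously shows $Z\in\tilde{\mathcal{Y}}(1)$ and that $E_{\mathbb{Q}}[\tilde{X}_T]\le 1$, i.e. $E_{\mathbb{P}}[\,Z_T X_T S_{0T}^{-1}\,]\le x$ for every $X\in\mathcal{X}(x)$. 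Completeness then supplies the converse: any $\mathcal{F}_T$-measurable $X_T\ge 0$ with $E_{\mathbb{P}}[\,Z_T X_T S_{0T}^{-1}\,]=x$ is replicable by some $X\in\mathcal{X}(x)$. Hence the primary problem is equivalent to maximizing $E_{\mathbb{P}}[U(X_T)]$ over $X_T\ge 0$ subject to the single budget constraint $E_{\mathbb{P}}[\,Z_T X_T S_{0T}^{-1}\,]=x$.

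This static concave program I would solve by a pointwise Lagrangian. With $I:=(U')^{-1}$, which under the Inada conditions (Assumption 2.2) is a continuous, strictly decreasing bijection of $(0,\infty)$ onto itself, the candidate optimizer is $X_T^{*}=I(\lambda Z_T S_{0T}^{-1})$, and $\lambda=\lambda(x)>0$ is pinned down by the budget equation $E_{\mathbb{P}}[\,Z_T S_{0T}^{-1} I(\lambda Z_T S_{0T}^{-1})\,]=x$. Here I would invoke Assumptions 2.3 and 2.4 together with monotone and dominated convergence to show that the left-hand side is a finite, continuous, strictly decreasing function of $\lambda$ running from $+\infty$ to $0$, so that a unique $\lambda(x)$ exists; strict concavity of $U$ then gives existence and uniqueness of the primal optimizer $X_T^{*}=x\tilde{X}^{*}(x)S_{0T}$ without appealing to the asymptotic-elasticity condition needed in Lemma 2.1. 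Rearranging the first-order condition $U'(X_T^{*})=\lambda Z_T S_{0T}^{-1}$ gives $U'(x\tilde{X}^{*}(x)S_{0T})\,S_{0T}=\lambda Z_T=\lambda\,d\mathbb{Q}/d\mathbb{P}$, which is (2.13) up to identifying the multiplier.

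It then remains to show $\lambda=y=u'(x)$. I would differentiate the value function $u(x)=E_{\mathbb{P}}[U(X_T^{*}(x))]$: the envelope relation for the constrained optimum yields $u'(x)=\lambda(x)$, so setting $y=u'(x)$ gives $\lambda=y$ and completes the identification. I expect the main obstacle to be the analytic control of the budget map $\lambda\mapsto E_{\mathbb{P}}[\,Z_T S_{0T}^{-1} I(\lambda Z_T S_{0T}^{-1})\,]$ — establishing its finiteness, continuity, and surjectivity onto $(0,\infty)$ from Assumptions 2.3 and 2.4 rather than from an asymptotic-elasticity hypothesis — together with the rigorous justification that completeness (equivalently, the singleton $\mathcal{M}$ and the attendant martingale-representation property) makes the budget constraint a genuine \emph{characterization} of attainable terminal wealths. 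That last step is exactly what collapses the dual domain $\tilde{\mathcal{Y}}(1)$ so that its minimizer is the single density $Z$, whence the optimizer $\tilde{Y}^{*}$ of Lemma 2.1 equals $Z$ and formula (2.13) carries no residual supermartingale slack.
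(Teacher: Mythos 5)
The paper does not prove this lemma at all: its ``proof'' is a one-line citation to Kramkov \& Schachermayer (1999), Theorem 2.0. Your proposal, by contrast, reconstructs the substance behind that citation via the classical Cox--Huang/Karatzas--Lehoczky--Shreve martingale method: reduce the dynamic problem to a static one over terminal wealths under the single budget constraint $E_{\mathbb{P}}[Z_T X_T S_{0T}^{-1}]\le x$, solve pointwise with $X_T^{*}=I(\lambda Z_T S_{0T}^{-1})$, and identify $\lambda=u'(x)$ by the envelope relation. This is the right proof and, in the complete case, is essentially what the cited theorem's argument amounts to, so your route is faithful rather than genuinely divergent; what it buys is transparency about exactly where completeness enters (the budget constraint becomes a characterization of attainable wealths, collapsing $\tilde{\mathcal{Y}}(1)$ to the single density $Z$, so the supermartingale slack of the incomplete case disappears and no asymptotic-elasticity hypothesis is needed).

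Two caveats, both of which you flag but whose proposed resolution is too optimistic. First, finiteness of the budget map $\lambda\mapsto E_{\mathbb{P}}[Z_T S_{0T}^{-1} I(\lambda Z_T S_{0T}^{-1})]$ does not follow from Assumptions 2.3 and 2.4 ``by monotone and dominated convergence'': Assumption 2.3 gives $E^{\mathbb{Q}}[S_{0T}^{-1}]<\infty$, which controls nothing once multiplied by the unbounded $I$, and dominated convergence presupposes integrability at some $\lambda_0$, which is precisely what must be established. The standard repair is the one in Kramkov--Schachermayer: show first that $v(y)=E_{\mathbb{P}}[V(yZ_T S_{0T}^{-1})]<\infty$ for all $y>0$ (this is where $u(x_0)<\infty$ is really used), then exploit convexity of $v$ — finite convex functions on $(0,\infty)$ have finite one-sided derivatives everywhere, and $-v'(y)$ equals the budget map by monotone difference quotients. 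Second, in the abstract semimartingale setting of Section 2, the step ``every $\mathcal{F}_T$-measurable $X_T\ge 0$ with $E_{\mathbb{P}}[Z_T X_T S_{0T}^{-1}]=x$ is replicable'' is not a soft consequence of $\mathcal{M}$ being a singleton; it requires the second fundamental theorem of asset pricing (Jacka; Ansel--Stricker) applied to the $\mathbb{Q}$-martingale $E^{\mathbb{Q}}[X_T S_{0T}^{-1}\mid\mathcal{F}_t]$, together with a maximality argument for unbounded claims. Since you name both issues as the main obstacles, the proposal is a correct plan with the hard analytic content deferred to exactly the right places, not a flawed one.
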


\begin{proof}
See Kramkov \& Schachermayer's (1999) Theorem 2.0.				
\end{proof}

Next, we define a marginal utility-based price. We denote the payoff of a contingent claim that is paid at time $T$ by an ${{\mathcal{F}}_{T}}$ measurable random variable $B$. For $\left( x,q \right)\in {{R}^{2}}$, we denote by $\mathcal{X}\left( x,q\left| B \right. \right)$ the set of acceptable strategies whose initial wealth is $x$ and terminal welth plus $qB$ is non-negative; specifically,

\begin{equation}
	\mathcal{X}\left( x,q\left| B \right. \right):=\left\{ X\text{ is an acceptable strategy with }{{X}_{0}}=x\text{ and }{{X}_{T}}+qB\ge 0 \right\}\text{ }\!\!~\!\!\text{ }\text{.}	
\end{equation}

\begin{definition}[A Marginal Utility-Based Price]
Suppose the claim $B\in {{L}^{0}}$, and $x>0$. The price of the claim $p$ is the marginal utility-based price of $B$ given the initial wealth $x$ if
\begin{equation}
\label{ZEqnNum561875}
	E\left[ U\left( {{X}_{T}}+qB \right) \right]\le u\left( x \right)\text{for}\ \forall q\in R,\forall X\in \mathcal{X}\left( x-pq,q\left| B \right. \right).	
\end{equation}
\end{definition}

The right side is the maximal expected utility from the tradable assets, and the left side is the expected utility when investing $pq$ in the claim with the remaining in the tradable assets. Equation (2.15) signifies that the investor cannot improve the expected utility regardless of the amount $q$ that is added to the portfolio if $p$ is the marginal utility-based price.

The following Lemma \ref{_Ref506067513} combines the expected utility maximization problem with the price of a contingent claim.

\begin{lemma}
\label{_Ref506067513}
Suppose Assumptions {\upshape2.1, 2.2, 2.3} and {\upshape2.4} hold, and $v\left( y \right)<\infty $. Let $X\in \mathcal{X}\left( 1 \right)$ as an arbitrary maximal strategy. If ${{\tilde{Y}}^{*}}\left( y \right)\tilde{X}$ is a uniformly integrable martingale, a contingent claim $B$, such as $\left| B \right|\le a{{X}_{T}}$ for some constant $a>0$, has a unique marginal utility-based price $p\left( B\left| x \right. \right)$. This $p\left( B\left| x \right. \right)$ is given by the following equation:
\begin{equation}
	p\left( B\left| x \right. \right)=E\left[ \tilde{Y}_{T}^{*}\left( y \right)BS_{0T}^{-1} \right].	
\end{equation}
\end{lemma}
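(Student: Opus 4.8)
The natural candidate for the price is the right-hand side of (2.16), so the plan is to show both that $p^{*} := E\left[\tilde{Y}_T^{*}(y)\,B\,S_{0T}^{-1}\right]$ satisfies Definition 2.6 and that no other real number does. I would fix $y = u'(x)$ and take $\tilde{Y}^{*}(y)$ to be the unique dual optimizer supplied by Lemma 2.1(1). The first task is to check that $p^{*}$ is finite: since $|B| \le a X_T$ and $\tilde{Y}^{*}(y)\tilde{X}$ is a uniformly integrable martingale by hypothesis, one has $|p^{*}| \le a\,E\left[\tilde{Y}_T^{*} S_{0T}^{-1} X_T\right] = a\,E\left[\tilde{Y}_T^{*} \tilde{X}_T\right] = a\,\tilde{Y}_0^{*} \tilde{X}_0 < \infty$, so the expectation in (2.16) is well defined.

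For the existence half (that $p^{*}$ is a marginal utility-based price) I would argue by the Fenchel inequality attached to the conjugate pair $(U,V)$, namely $U(w) \le V(z) + zw$ for all $w,z>0$. Applying this pointwise with $z = y\,\tilde{Y}_T^{*} S_{0T}^{-1}$ and $w = X_T + qB$, for an arbitrary acceptable $X \in \mathcal{X}(x-p^{*}q, q\,|\,B)$, and taking expectations yields
\[
E\left[U(X_T+qB)\right] \le E\left[V\left(y\tilde{Y}_T^{*} S_{0T}^{-1}\right)\right] + y\,E\left[\tilde{Y}_T^{*} S_{0T}^{-1} X_T\right] + q\,y\,E\left[\tilde{Y}_T^{*} S_{0T}^{-1} B\right].
\]
The first term is $v(y)$ by the definition of the dual problem (2.7); the last term is $q\,y\,p^{*}$. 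For the middle term I would use that $\tilde{Y}^{*} \tilde{X}$ is a supermartingale for every relative wealth process, giving $E\left[\tilde{Y}_T^{*} S_{0T}^{-1} X_T\right] = E\left[\tilde{Y}_T^{*} \tilde{X}_T\right] \le \tilde{Y}_0^{*} \tilde{X}_0 = x - p^{*} q$. Substituting, the $q$-terms cancel and the bound collapses to $v(y) + xy$, which equals $u(x)$ by the conjugacy of $u$ and $v$ at $y=u'(x)$ established in Kramkov \& Schachermayer (1999). Hence (2.15) holds for every $q$ and every acceptable $X$, i.e. $p^{*}$ is a marginal utility-based price.

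For uniqueness I would show that any marginal price $p$ is forced to equal $p^{*}$ through a first-order condition. Fixing a near-optimal (or, where Lemma 2.1(2) applies, the optimal) relative strategy $\tilde{X}^{*} \in \tilde{\mathcal{X}}(1)$ and the budget split $x - pq$, the function $q \mapsto E\left[U\left((x-pq)S_{0T}\tilde{X}_T^{*} + qB\right)\right]$ is concave, lies below $u(x)$ by (2.15), and attains $u(x)$ at $q=0$; therefore its derivative at $q=0$ vanishes. Differentiating under the expectation (justified by the domination $|B| \le a X_T$ together with the integrability of $\tilde{Y}^{*} \tilde{X}$) and inserting the first-order relation $U'\left(x S_{0T}\tilde{X}_T^{*}\right)S_{0T} = y\tilde{Y}_T^{*}$ from Lemma 2.1 gives $0 = y\left(E\left[\tilde{Y}_T^{*} S_{0T}^{-1} B\right] - p\,E\left[\tilde{Y}_T^{*} \tilde{X}_T^{*}\right]\right)$. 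The martingale identity $E\left[\tilde{Y}_T^{*} \tilde{X}_T^{*}\right] = \tilde{Y}_0^{*} \tilde{X}_0 = 1$ then forces $p = E\left[\tilde{Y}_T^{*} S_{0T}^{-1} B\right] = p^{*}$, establishing uniqueness.

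I expect the main obstacle to be the middle-term inequality $E\left[\tilde{Y}_T^{*} S_{0T}^{-1} X_T\right] \le x - p^{*} q$ for \textit{acceptable} rather than merely admissible strategies. Admissible strategies keep wealth non-negative, so $\tilde{Y}^{*} \tilde{X}$ is a genuine supermartingale and the bound is immediate; an acceptable strategy, however, is only a difference $X = X' - X''$ of an admissible and a maximal strategy and may be unbounded below, so the supermartingale property is not automatic. Resolving this is exactly where the hypotheses do their work: the bound $|B| \le a X_T$ confines the competitors to those dominated by the fixed maximal strategy, and the assumption that $\tilde{Y}^{*}(y)\tilde{X}$ is a uniformly integrable martingale lets me invoke the Delbaen \& Schachermayer (1997) characterization of maximal strategies to recover the terminal supermartingale inequality for the whole acceptable class. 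A secondary technical point is justifying the interchange of differentiation and expectation in the uniqueness step, which the same domination $|B| \le a X_T$ underwrites.
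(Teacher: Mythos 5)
The paper itself does not prove this lemma---it simply cites Hugonnier \textit{et al.} (2005), Theorem 3.1(i)---and your existence half faithfully reconstructs the architecture of that proof: Fenchel's inequality $U(w)\le V(z)+zw$ applied at $z=y\tilde{Y}_{T}^{*}S_{0T}^{-1}$, the terminal supermartingale bound for the competitor strategy, and the conjugacy $u(x)=v(y)+xy$ at $y=u'(x)$. You also correctly isolate the crux for \emph{acceptable} strategies and the right mechanism for resolving it: with $|B|\le aX_{T}$ one passes to the cushioned process $X'+|q|aX$ (cushioning the competitor $X'$ by the fixed maximal strategy $X$), whose terminal value is nonnegative, and the subtraction step closes only because the hypothesis gives the \emph{equality} $E[\tilde{Y}_{T}^{*}\tilde{X}_{T}]=1$; a mere supermartingale inequality would point the wrong way there, so the uniform integrability is doing exactly the work you say it does. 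Up to the technical appeal to Delbaen \& Schachermayer (1997), this half is sound.

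The genuine gap is in your uniqueness argument, in two places. First, your test function $q\mapsto E[U((x-pq)S_{0T}\tilde{X}_{T}^{*}+qB)]$ is not in general dominated by $u(x)$: Definition 2.6 constrains only strategies in $\mathcal{X}(x-pq,q|B)$, and the rescaled primal optimizer need not satisfy the constraint $X_{T}+qB\ge 0$, since $B$ may be negative on events where $X_{T}^{*}$ is small; so the claim that the function ``lies below $u(x)$ by (2.15)'' is unjustified as written. The repair is the same cushion as in the existence half: test instead with $X^{(q)}=\frac{x-pq-|q|a}{x}X^{*}+|q|aX$, which does lie in $\mathcal{X}(x-pq,q|B)$ for small $|q|$; the two one-sided derivatives at $q=0$ then give $y(p^{*}-p)\le 0$ (from $q>0$) and $y(p^{*}-p)\ge 0$ (from $q<0$), the cushion terms $\pm a$ cancelling exactly because $E[\tilde{Y}_{T}^{*}\tilde{X}_{T}]=1$---the UI hypothesis again---so $p=p^{*}$. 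Second, your argument presupposes the existence of the primal optimizer $\tilde{X}^{*}(x)$ together with relation (2.12), but the lemma's stated hypotheses (Assumptions 2.1--2.4 and $v(y)<\infty$) do not include the elasticity condition $\lim_{x\to\infty}xU'(x)/U(x)<1$ of Lemma 2.1(2); your fallback to a ``near-optimal'' strategy does not rescue the step, because an exact first-order condition holds only at an exact maximizer. To close this you must either add the elasticity hypothesis, or invoke the Kramkov--Schachermayer (2003) result that finiteness of $v$ on all of $(0,\infty)$ already yields the primal optimizer and the relation $y=u'(x)$---this is the setting in which Hugonnier \textit{et al.} actually work, and their own uniqueness proof proceeds through the superdifferential of the perturbed value function in $(x,q)$ rather than through a first-order condition at the primal optimizer.
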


\begin{proof}
See Hugonnier \textit{et al.} (2005) Theorem 3.1(i).				 
\end{proof}

If ${{\tilde{Y}}^{*}}\left( y \right)$ is a uniformly integrable martingale, it corresponds to a Radon-Nikodým derivative process of some equivalent local martingale measure $\mathbb{Q}\in \mathcal{M}$; thus, the right side of (2.16) can be expressed as ${{E}^{Q}}\left[ BS_{0T}^{-1} \right]$, where ${{E}^{Q}}\left[ \ \cdot \  \right]$ expresses the expected value under $\mathbb{Q}$. However, ${{\tilde{Y}}^{*}}\left( y \right)$ is generally not necessarily a uniformly integrable martingale, so we cannot express the right side of (2.16) as an expected value under some measure. It is simply the expected value of the payoff's present value multiplied by ${{\tilde{Y}}^{*}}\left( y \right)$ under the original measure.

In a complete market, $\tilde{Y}_{T}^{*}\left( y \right)$ coincides with the unique Radon-Nikodým derivative ${d\mathbb{Q}}/{d\mathbb{P}}\;$. Thus, we observe (2.16) is an extension of the theoretical price formula in a complete market to an incomplete market. 

If $\tilde{Y}_{T}^{*}\left( y \right)$ is specified, we can calculate the theoretical price. However, it is difficult to explicitly calculate $\tilde{Y}_{T}^{*}\left( y \right)$. We solve the dual problem by specifying the set $\tilde{\mathcal{Y}}\left( 1 \right)$ as well as the value function $u\left( \cdot  \right)$ of the primary problem. However, specifying $u\left( \cdot  \right)$ amounts to solving the primary problem, which is only possible after we solve the dual problem. Therefore, the solution is cyclical, and it is not possible to specify $\tilde{Y}_{T}^{*}\left( y \right)$ in a general case.

Previous studies have attempted to solve this problem by restricting the utility function to a certain class. For example, Davis (1997) solved for a log utility function, Frittelli (2000) solved for an exponential utility function, and Henderson (2002) solved for a power and exponential utility function. However, it is seemingly difficult to reach an agreement regarding an application in which these utility functions are appropriate. Further, if we deny these utility functions, we must specify the utility function, its risk aversion parameter, and initial wealth size, which makes it more difficult to reach an agreement a fortiori.

We propose another solution, in that we will restrict the stochastic process instead of the utility function to a certain class. We can indicate that such a restriction is not restrictive, and hence, aggregable in application for the SPPC's theoretical price. Section 3 further explains this solution.

\section{The Stock Price and Performance Variable Model }
\subsection{The stochastic processes}

Henceforth, we suppose the stochastic processes that the $d$-dimensional Brownian motions drive. Consider the setting of (3.1), which consists of one money market account, $m$ stocks, and $d-m$ performance variables.
\begin{equation}
\label{ZEqnNum789836}
	\begin{aligned}
  & d{{S}_{0}}_{t}={{S}_{0}}_{t}{{r}_{t}}dt, \\ 
 & d{{\mathbf{S}}_{t}}=diag\left( {{\mathbf{S}}_{t}} \right)\left( {{\mathbf{b}}_{t}}dt+{{\mathbf{\Sigma }}_{t}}d{{\mathbf{w}}_{t}} \right), \\ 
 & d{{\mathbf{P}}_{t}}=diag\left( {{\mathbf{P}}_{t}} \right)\left( {{\mathbf{c}}_{t}}dt+{{\mathbf{T}}_{t}}d{{\mathbf{w}}_{t}} \right), \\ 
\end{aligned}	
\end{equation}

{\flushleft{where $\mathbf{S},\mathbf{b}$ are $m$-dimensional column vectors; $\mathbf{P},\mathbf{c}$ are $\left( d-m \right)$-dimensional column vectors; $\mathbf{\Sigma }$ is a $m\times d$-dimensional matrix; $\mathbf{T}$ is a $\left( d-m \right)\times d$-dimensional matrix; $\mathbf{w}$ is a $d$-dimensional Brownian motion; and $diag\left( \mathbf{x}  \right)$ is a diagonal matrix with a vector $\mathbf{x}$ as the diagonal elements. The processes $r,\mathbf{b},\mathbf{c},\mathbf{\Sigma }$, and $\mathbf{T}$ are adapted to ${{\mathcal{F}}_{t}}$. For the time variable $t$, we use such notations as ${{\mathbf{b}}_{t}}$ and $\mathbf{b}\left( t \right)$ interchangeably, and sometimes omit $t$ for simplicity.}}

It is worth noting the meanings of performance, as three types of performance variables exist. The first type is a flow variable. For example, consider net profits, defined as the quantity of flow that a firm earns for a certain period. We denote net profit from time $a$ to time $t$ by $N\left( a,t \right)$. The differential of $N\left( a,t \right)$ with respect to $t$ does not depend on the value of $a$, and we denote it by $n\left( t \right):=\partial {N\left( a,t \right)}/{\partial t}\;$. The meaning of $n\left( t \right)$ is an instantaneous rate of net profits at time $t$. We can express net profits for any period $\left[ a,b \right]$ by the equation: 
\begin{equation}
\label{ZEqnNum786518}
	N\left( a,b \right)=\int_{a}^{b}{n\left( t \right)dt}.	
\end{equation}

{\flushleft{Thus, the instantaneous rate of performance variables at time $t$ are useful variables to describe various relationships around the first type performance variables. We adopt this as a basic variable. The performance variables ${{\mathbf{P}}_{t}}$ in (3.1) belonging to this type are such instantaneous rates of performance variables at time $t$. We denote by ${{N}_{1}}$ the set of $i$ such that for $i\in {{N}_{1}}$, ${{P}_{i}}$ belongs to this type.}}

The second performance variable type is one that expresses the state of a certain project, such as the development of a new drug. When modeling this development process, we suppose that several stages exist in judging a success or failure. In order for success at each stage, the performance variable must exceed each threshold; if it fails to exceed the threshold, it cannot proceed. When all thresholds are exceeded, this project eventually succeeds. This process is modeled as follows. There are $n$ time points, such as $0={{t}_{0}}<{{t}_{1}}<{{t}_{2}}<\ldots <{{t}_{n}}\le T$, and the following condition means the $i$th stage is successful:
\begin{equation}
\label{ZEqnNum622135}
	\frac{P\left( {{t}_{i}} \right)}{P\left( {{t}_{i-1}} \right)}\ge {{K}_{i}}.
\end{equation}

{\flushleft{If the condition is not satisfied for some $i$, this project failed at that stage. The ratio at the two time points is used so the state variables can indicate the success or failure at each stage independent from each other. If we formulate (3.3) as $P\left( {{t}_{i}} \right)\ge {{K}_{i}}$, a large value of $P\left( {{t}_{i-1}} \right)$ signifies a high probability of next-stage success, and each step is not independent. We use ${{N}_{2}}$ to denote the set of $i$ such that for $i\in {{N}_{2}}$, ${{P}_{i}}$ belongs to this type.}}

The third performance variable type expresses the instantaneous state at each time, such as the market share. We use ${{N}_{3}}$ to denote the set of $i$ such that for $i\in {{N}_{3}}$, ${{P}_{i}}$ belongs to this type.

When we formulate the SPPC in more detail in Section 4, we will return to the distinction between these three types of performance variables.

The total volatility matrix ${{\left[ \begin{matrix}
   {{\mathbf{\Sigma }}^{\top }} & {{\mathbf{T}}^{\top }}  \\
\end{matrix} \right]}^{\top }}$ is a $d$-dimensional square matrix. We can assume it is a regular matrix without a loss of generality. We use ${{\mathbf{w}}_{1}}$ to denote the first $m$ Brownian motions of $\mathbf{w}$ , ${{\mathbf{w}}_{2}}$ the remaining $d-m$ Brownian motions, and ${{\mathcal{F}}^{\left( 1 \right)}}$ a filtration generated by ${{\mathbf{w}}_{1}}$.
\subsection{An explicit solution to the dual problem}
\begin{assumption}
\label{_Ref505971279}
$r,\mathbf{b}$ and $\mathbf{\Sigma }$ are adapted to ${{\mathcal{F}}^{\left( 1 \right)}}$.
\end{assumption}

Assumption \ref{_Ref505971279} implies $\mathbf{\Sigma }=\left[ \begin{matrix}
   {{\mathbf{\Sigma }}_{1}} & \mathbf{0}  \\
\end{matrix} \right]$. This form of matrix appears to be restrictive, but this is not the case. As the covariance matrix $\mathbf{V}:=\mathbf{\Sigma }{{\mathbf{\Sigma }}^{\top }}$ of the stock prices is a positive definite, an $m$-dimensional square root matrix ${{\mathbf{V}}^{{1}/{2}\;}}$ exists. As ${{\mathbf{V}}^{{1}/{2}\;}}$ can reproduce $\mathbf{V}$, we can use $\left[ \begin{matrix}
   {{\mathbf{V}}^{{1}/{2}\;}} & 0  \\
\end{matrix} \right]$ as the volatility matrix.

Another implication of Assumption 3.1 is to exclude the possibility that the performance variable $\mathbf{P}$ affects $r,\mathbf{b}$ and ${{\mathbf{\Sigma }}_{1}}$. However, such a model has no analytical advantage when evaluating the SPPC's price.

The first result is Theorem \ref{_Ref506079705}.
\begin{theorem}
\label{_Ref506079705}
We define an ${{R}^{m}}$-valued stochastic process \text{\boldmath$\theta$} and an ${{R}^{1}}$-valued stochastic process $Z$ by the following equation

\begin{equation}
\begin{aligned}
 & \text{\boldmath$\theta$}:=\mathbf{\Sigma }_{1}^{-1}\left( \mathbf{b}+\mathbf{d}-r{{\mathbf{1}}_{m}} \right), \\ 
 & {{Z}_{t}}:=\exp \left( -\int_{0}^{t}{{\text{\boldmath$\theta$}^{\top }}}d{{\mathbf{w}}_{1}}_{s}-{\int_{0}^{t}{\left\| \text{\boldmath$\theta$} \right\|}^{2}}ds \right), \\ 
\end{aligned}	
\end{equation}

{\flushleft{where $\mathbf{d}$ is an adapted ${{R}^{m}}$-valued dividend yield process. If Assumptions {\upshape2.1, 2.2, 2.3, 2.4}, and {\upshape3.1} hold, then in setting {\upshape(3.1)} we have}}
\begin{equation}
\label{ZEqnNum438632}
	\tilde{Y}_{T}^{*}\left( y \right)={{Z}_{T}}.	
\end{equation}
\end{theorem}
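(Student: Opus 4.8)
The plan is to exploit Assumption \ref{_Ref505971279}, which forces the coefficients $r,\mathbf{b},\mathbf{\Sigma}_1$ of the tradable assets to be adapted to $\mathcal{F}^{(1)}=\sigma(\mathbf{w}_1)$. Consequently the market spanned by the money-market account and the $m$ stocks is complete \emph{inside} the sub-filtration $\mathcal{F}^{(1)}$ (there are $m$ stocks driven by the $m$ independent components $\mathbf{w}_1$ with invertible $\mathbf{\Sigma}_1$), while the performance-driving Brownian motion $\mathbf{w}_2$ is independent of $\mathbf{w}_1$ and absent from the asset dynamics. The idea is to reduce the full incomplete-market problem to a complete-market problem on $\mathcal{F}^{(1)}$, where Lemma \ref{_Ref506080083} applies, and then transport the first-order condition back to the full market through Lemma \ref{_Ref509074698}.

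First I would identify the candidate $Z$ with the density process of the minimal equivalent local martingale measure, i.e. the stochastic exponential $\mathcal{E}(-\boldsymbol{\theta}\cdot\mathbf{w}_1)$. Defining $\mathbb{Q}$ by $d\mathbb{Q}/d\mathbb{P}=Z_T$, Girsanov's theorem makes $\mathbf{w}_1+\int_0^\cdot\boldsymbol{\theta}_s\,ds$ a $\mathbb{Q}$-Brownian motion while leaving $\mathbf{w}_2$ a $\mathbb{Q}$-Brownian motion, since its risk premium is zero. Substituting into the discounted cum-dividend dynamics of each stock and using $\mathbf{\Sigma}=[\,\mathbf{\Sigma}_1\ \mathbf{0}\,]$ together with $\boldsymbol{\theta}=\mathbf{\Sigma}_1^{-1}(\mathbf{b}+\mathbf{d}-r\mathbf{1}_m)$, the drift of each $\tilde{S}_i+\tilde{D}_i$ cancels, so each is a $\mathbb{Q}$-local martingale; hence every $\tilde{X}\in\tilde{\mathcal{X}}(1)$ is a $\mathbb{Q}$-local martingale, $Z\tilde{X}$ is a nonnegative $\mathbb{P}$-local martingale and therefore a supermartingale, giving $Z\in\tilde{\mathcal{Y}}(1)$. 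Decisively, because the change of measure touches only $\mathbf{w}_1$, the density $Z_T$ is $\mathcal{F}_T^{(1)}$-measurable.

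Next I would show the optimal terminal wealth of the primary problem may be taken $\mathcal{F}_T^{(1)}$-measurable, so the problem collapses to the complete sub-market. For any $\tilde{X}\in\tilde{\mathcal{X}}(1)$, the projection $E^{\mathbb{P}}[\tilde{X}_T\mid\mathcal{F}_T^{(1)}]$ has the same $\mathbb{Q}$-price as $\tilde{X}_T$: since $Z_T$ is $\mathcal{F}_T^{(1)}$-measurable one gets $E^{\mathbb{Q}}[E^{\mathbb{P}}[\tilde{X}_T\mid\mathcal{F}_T^{(1)}]]=E^{\mathbb{P}}[Z_T\tilde{X}_T]=E^{\mathbb{Q}}[\tilde{X}_T]\le 1$, so the projected claim is replicable in the complete sub-market from initial wealth $1$. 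Because $S_{0T}$ is also $\mathcal{F}_T^{(1)}$-measurable, conditional Jensen applied to the concave $U$ gives $E[U(xS_{0T}\tilde{X}_T)]\le E[U(xS_{0T}E^{\mathbb{P}}[\tilde{X}_T\mid\mathcal{F}_T^{(1)}])]$, so projecting never lowers expected utility. Thus $u(x)$ equals the value of the utility-maximization problem posed in $\mathcal{F}^{(1)}$, and by strict concavity the unique optimizer $\tilde{X}^*(x)$ is $\mathcal{F}_T^{(1)}$-measurable.

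Finally I would apply Lemma \ref{_Ref506080083} in the sub-market, whose unique equivalent local martingale measure has density $d\mathbb{Q}_1/d\mathbb{P}_1=Z_T$, to obtain $U'(x\tilde{X}^*(x)S_{0T})S_{0T}=yZ_T$; simultaneously Lemma \ref{_Ref509074698}(ii) in the full market gives $U'(x\tilde{X}^*(x)S_{0T})S_{0T}=y\tilde{Y}_T^*(y)$ for the \emph{same} optimizer. Equating and cancelling $y$ yields $\tilde{Y}_T^*(y)=Z_T$. The main obstacle is the third step: a strategy is only required to be predictable for the full filtration and could in principle exploit $\mathbf{w}_2$, so one must rigorously argue such information is worthless—this rests on the independence of $\mathbf{w}_2$ from the asset dynamics and, crucially, on the $\mathcal{F}^{(1)}$-measurability of the martingale-measure density forced by Assumption \ref{_Ref505971279}. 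One must also handle the integrability and admissibility technicalities (replicability of the projected claim, and genuine versus merely local martingality) using Assumptions \ref{_Ref505931248}--\ref{_Ref505931250} and the uniform-integrability hypotheses invoked in Lemma \ref{_Ref506067513}.
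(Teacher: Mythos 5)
Your proposal is correct and follows essentially the same route as the paper: use Assumption 3.1 to reduce the primary problem to the complete sub-market driven by $\mathbf{w}_1$, apply Lemma \ref{_Ref506080083} there to get $U'\left( x\tilde{X}_{T}^{*}\left( x \right){{S}_{0}}_{T} \right){{S}_{0}}_{T}=y{{Z}_{T}}$, apply Lemma \ref{_Ref509074698} in the full market to get $U'\left( x\tilde{X}_{T}^{*}\left( x \right){{S}_{0}}_{T} \right){{S}_{0}}_{T}=y\tilde{Y}_{T}^{*}\left( y \right)$, and equate. Your projection-plus-conditional-Jensen argument in the third step merely makes rigorous the claim the paper asserts without proof (that the information in ${{\mathcal{F}}_{t}}$ beyond $\mathcal{F}_{t}^{\left( 1 \right)}$ cannot improve the expected utility), which is a welcome strengthening rather than a different approach.
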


{\flushleft{We denote by $\mathbb{Q}$ a measure whose Radon-Nikodým derivative ${d\mathbb{Q}}/{d\mathbb{P}}\;$ is ${{Z}_{T}}$. We call $\mathbb{Q}$ an optimal measure.}}

\begin{proof}
As the tradable assets' price processes are adapted to ${{\mathcal{F}}^{\left( 1 \right)}}$ by Assumption \ref{_Ref505971279}, the information ${{\mathcal{F}}_{t}}/\mathcal{F}_{t}^{\left( 1 \right)}$ does not improve the expected utility. We can restrict the admissible wealth processes of the primary problem to those adapted to ${{\mathcal{F}}^{\left( 1 \right)}}$. Therefore, it is possible to solve the primary problem (2.4) with only setting 
\begin{equation}
\label{ZEqnNum405004}
\begin{aligned}
  & d{{S}_{0}}_{t}={{S}_{0}}_{t}{{r}_{t}}dt, \\ 
 & d{{\mathbf{S}}_{t}}=diag\left( {{\mathbf{S}}_{t}} \right)\left( \mathbf{b}dt+{{\mathbf{\Sigma }}_{1}}d{{\mathbf{w}}_{1}}_{t} \right). \\ 
\end{aligned}
\end{equation}

As the maximal values of each primary problem with (3.1) and (3.6) are equal, the uniqueness of the optimal solution to each primary problem—which are established by Lemmas 2.1 and 2.2—leads to an equality between two optimal solutions.

Further, $\mathbb{Q}$ is used to denote the unique equivalent local martingale measure when we consider the primary problem within setting (3.6). The Radon-Nikodým derivative ${d\mathbb{Q}}/{d\mathbb{P}}\;$ coincides with ${{Z}_{T}}$ under setting (3.6); see Theorems 3.6.3 and 3.6.11 (6.23) in the work of Karatzas \& Shreve (1998). Lemma \ref{_Ref506080083} leads to
\begin{equation}
\label{ZEqnNum558848}
	{U}'\left( x\tilde{X}_{T}^{*}\left( x \right){{S}_{0}}_{T} \right){{S}_{0}}_{T}=y{{Z}_{T}}.	
\end{equation}

{\flushleft{Alternatively, by considering a general setting (3.1) in which we consider the performance variables, Lemma \ref{_Ref509074698} leads to}}
\begin{equation}
\label{ZEqnNum636747}
	{U}'\left( x\tilde{X}_{T}^{*}\left( x \right) {{S}_{0}}_{T}\right){{S}_{0}}_{T}=y\tilde{Y}_{T}^{*}\left( y \right).	
\end{equation}

\leftline{We compare the right sides of (3.7) and (3.8) to obtain $\tilde{Y}_{T}^{*}\left( y \right)={{Z}_{T}}$.}
\end{proof}

Next, we calculate the stochastic processes of setting (3.1) under $\mathbb{Q}$. From Gilzanov's theorem, $\mathbf{\hat{w}}$ defined by (3.9) becomes a Brownian motion under $\mathbb{Q}$ (Karatzas \& Shreve 1998 Remark 1.5.3; Karatzas \& Shreve 2012 Section 3.5).
\begin{equation}
	\begin{aligned}
  & d\mathbf{\hat{w}}{{}_{1}}_{t}:=\text{\boldmath$\theta$}dt+d{{\mathbf{w}}_{1}}_{t}, \\ 
 & d\mathbf{\hat{w}}{{}_{2}}_{t}:=d{{\mathbf{w}}_{2}}_{t}, \\ 
\end{aligned}	
\end{equation}

{\flushleft{
where $\left[ \begin{matrix}
   {{\mathbf{T}}_{1}} & {{\mathbf{T}}_{2}}  \\
\end{matrix} \right]=\mathbf{T}$ and ${{\mathbf{T}}_{1}}$ is a $\left( d-m \right)\times m$-dimensional matrix.
}}
Substituting (3.4) and (3.9) into (3.1) creates the following equations:
\begin{equation}
	\begin{aligned}
  & d{{S}_{0}}_{t}={{S}_{0}}_{t}{{r}_{t}}dt, \\ 
 & d{{\mathbf{S}}_{t}}=diag\left( {{\mathbf{S}}_{t}} \right)\left( \left( r{{\mathbf{1}}_{m}}-\mathbf{d} \right)dt+{{\mathbf{\Sigma }}_{1}}d{{{\mathbf{\hat{w}}}}_{1}} \right), \\ 
 & d{{\mathbf{P}}_{t}}=diag\left( {{\mathbf{P}}_{t}} \right)\left( \left( \mathbf{c}-{{\mathbf{T}}_{1}}\mathbf{\Sigma }_{1}^{-1}\left( \mathbf{b}+\mathbf{d}-r{{\mathbf{1}}_{m}} \right) \right)dt+{{\mathbf{T}}_{1}}d{{{\mathbf{\hat{w}}}}_{1}}+{{\mathbf{T}}_{2}}d{{{\mathbf{\hat{w}}}}_{2}} \right).  
\end{aligned}	
\end{equation}

Ultimately, when we change the measure from $\mathbb{P}$ to $\mathbb{Q}$, we must change the drift terms of all variables. The changes to the stock prices' drift terms are the same as those in a complete market. We subtract the expected excess return $\mathbf{b}+\mathbf{d}-r{{\mathbf{1}}_{m}}$ from $\mathbf{b}$. It is noteworthy in (3.10) that we must also change the drift terms of the performance variables. We must subtract from $\mathbf{c}$ the expected excess returns $\mathbf{b}+\mathbf{d}-r{{\mathbf{1}}_{m}}$ multiplied by ${{\mathbf{T }}_{1}}\mathbf{\Sigma }_{1}^{-1}$.

Further, $\mathbb{Q}$ is a so-called minimal martingale measure (MMM); see Bingham \& Kiesel (2013), Section 7.2.3. Previous studies utilized the MMM to find the “local risk minimization strategy”, but did not analyze the relationship between MMM and an optimal solution to the dual problem (Schweizer 1999). This paper's novelty lies in our evidence that the MMM is the optimal solution to the dual problem under setting (3.1) and the assumptions from Theorem \ref{_Ref506107626}.
\footnote{Karatzas \textit{et al}. (1991) proved the optimality of MMM in the case of a power utility function and the “totally unhedgeable” $r,\mathbf{b}$ and ${{\mathbf{\Sigma }}_{1}}$, which are incredibly restrictive assumptions in application. See example 10.2 in the work of Karatzas \textit{et al}. (1991). Another example is a Hull-White stochastic volatility model, which assumes independence between the state variables and stock price. This assumption is restrictive in estimating the SPPC's theoretical price. See Bingham \& Kiesel (2013, p. 316).}

We then explain the meaning of ${{\mathbf{T }}_{1}}\mathbf{\Sigma }_{1}^{-1}$. Fix arbitrary $1\le j\le d-m$. Denote by $m$-dimensional row vector ${{\text{\boldmath$\beta$}}_{j}}_{S}$ the betas of the multiple-regression model, which explain ${d{{P}_{j}}}/{{{P}_{j}}-{{c}_{j}}dt}\;$ (the random term of the instantaneous change rate of the $j$th performance variable) by ${d{{S}_{i}}}/{{{S}_{i}}}\;-{{b}_{i}}dt\ \left( 1\le i\le m \right)$, the $m$ random terms of the stock prices' instantaneous change rates. Transform ${{\mathbf{T}}_{1}}\mathbf{\Sigma }_{1}^{-1}={{\mathbf{T}}_{1}}\mathbf{\Sigma }_{1}^{\top }{{\left( {{\mathbf{\Sigma }}_{1}}\mathbf{\Sigma }_{1}^{\top } \right)}^{-1}}$. The $j$th row of ${{\mathbf{T}}_{1}}\mathbf{\Sigma }_{1}^{\top }$ is a row vector, the $i$th element of which is a covariance between ${d{{P}_{j}}}/{{{P}_{j}}-{{c}_{j}}dt}\;$ and ${d{{S}_{i}}}/{{{S}_{i}}}\;-{{b}_{i}}dt$. The matrix ${{\left( {{\mathbf{\Sigma }}_{1}}\mathbf{\Sigma }_{1}^{\top } \right)}^{-1}}$ is the inverse of the stock returns' covariance matrix. Therefore, the $j$th row of ${{\mathbf{T}}_{1}}\mathbf{\Sigma }_{1}^{-1}$ is ${{\text{\boldmath$\beta$}}_{j}}_{S}$. We call ${{\text{\boldmath$\beta$}}_{j}}_{S}$ multiple-regression betas and denote ${{\mathbf{T}}_{1}}\mathbf{\Sigma }_{1}^{-1}$ by ${{\mathbf{B}}_{P}}_{S}$.

We can observe the product of the multiple regression betas and the expected excess return is then subtracted from the drift term of the performance variable. We summarize the above results as Theorem \ref{_Ref506107626}.

\begin{theorem}
\label{_Ref506107626}
Under Assumptions {\upshape2.1, 2.2, 2.3, 2.4}, and {\upshape3.1}, the setting {\upshape(3.1)} is expressed as the following equations driven by the Brownian motions $\mathbf{\hat{w}}$ under $\mathbb{Q}$:
\begin{equation}
	\begin{aligned}
  & d{{S}_{0}}_{t}={{S}_{0}}_{t}{{r}_{t}}dt, \\ 
 & d{{\mathbf{S}}_{t}}=diag\left( {{\mathbf{S}}_{t}} \right)\left( \left( r{{\mathbf{1}}_{m}}-\mathbf{d} \right)dt+{{\mathbf{\Sigma }}_{1}}d\mathbf{\hat{w}}{{}_{1}}_{t} \right), \\ 
 & d{{\mathbf{P}}_{t}}=diag\left( {{\mathbf{P}}_{t}} \right)\left( \left( \mathbf{c}-{{\mathbf{B}}_{P}}_{S}\left( \mathbf{b}+\mathbf{d}-r{{\mathbf{1}}_{m}} \right) \right)dt+{{\mathbf{T}}_{1}}d\mathbf{\hat{w}}{{}_{1}}_{t}+{{\mathbf{T}}_{2}}d\mathbf{\hat{w}}{{}_{2}}_{t} \right).  
\end{aligned}	
\end{equation}
\end{theorem}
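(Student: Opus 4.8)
The plan is to obtain (3.11) as a routine change-of-measure computation, leaning entirely on the identification $\tilde{Y}_T^*(y)=Z_T$ already supplied by Theorem \ref{_Ref506079705}. Since that theorem exhibits $Z_T$ as the Radon--Nikod\'ym derivative $d\mathbb{Q}/d\mathbb{P}$ of an element $\mathbb{Q}\in\mathcal{M}$, the density process $Z_t=E[Z_T\mid\mathcal{F}_t]$ is automatically a genuine $\mathbb{P}$-martingale, so I would not need to re-establish any Novikov- or Kazamaki-type integrability condition before invoking Girsanov. The whole content of the theorem is then to push the dynamics (3.1) through this change of measure and simplify the resulting drift terms.

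First I would read off the Girsanov kernel from the definition of $Z$ in (3.4). Because Assumption \ref{_Ref505971279} forces $r,\mathbf{b},\mathbf{\Sigma}_1$ to be $\mathcal{F}^{(1)}$-adapted, $Z$ is the stochastic exponential built from $-\int\text{\boldmath$\theta$}^\top d\mathbf{w}_1$ alone; equivalently, the $d$-dimensional kernel is $[\,\text{\boldmath$\theta$}^\top\ \ \mathbf{0}^\top\,]^\top$, with zeros in the last $d-m$ coordinates. Girsanov's theorem then yields that the shifted processes $d\hat{\mathbf{w}}_{1t}=\text{\boldmath$\theta$}\,dt+d\mathbf{w}_{1t}$ and $d\hat{\mathbf{w}}_{2t}=d\mathbf{w}_{2t}$ are $\mathbb{Q}$-Brownian motions, which is exactly (3.9): the $\mathbf{w}_2$-block is left undisturbed precisely because the kernel vanishes there.

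Next I would substitute $d\mathbf{w}_1=d\hat{\mathbf{w}}_1-\text{\boldmath$\theta$}\,dt$ and $d\mathbf{w}_2=d\hat{\mathbf{w}}_2$ into (3.1). For the stock block, $\mathbf{\Sigma}=[\,\mathbf{\Sigma}_1\ \ \mathbf{0}\,]$ removes the $\mathbf{w}_2$ term, and the identity $\mathbf{\Sigma}_1\text{\boldmath$\theta$}=\mathbf{b}+\mathbf{d}-r\mathbf{1}_m$ collapses the drift $\mathbf{b}-\mathbf{\Sigma}_1\text{\boldmath$\theta$}$ to $r\mathbf{1}_m-\mathbf{d}$. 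For the performance block I would split $\mathbf{T}=[\,\mathbf{T}_1\ \ \mathbf{T}_2\,]$, so that $\mathbf{T}\,d\mathbf{w}=\mathbf{T}_1\,d\hat{\mathbf{w}}_1+\mathbf{T}_2\,d\hat{\mathbf{w}}_2-\mathbf{T}_1\text{\boldmath$\theta$}\,dt$; the new drift is $\mathbf{c}-\mathbf{T}_1\text{\boldmath$\theta$}=\mathbf{c}-\mathbf{B}_{PS}(\mathbf{b}+\mathbf{d}-r\mathbf{1}_m)$ after substituting $\mathbf{T}_1\mathbf{\Sigma}_1^{-1}=\mathbf{B}_{PS}$, the multiple-regression-beta identity justified in the discussion preceding the statement. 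The money-market equation $dS_{0t}=S_{0t}r_t\,dt$ has no diffusion term and is therefore unchanged. Assembling these three lines gives (3.11), which is simply (3.10) rewritten with $\mathbf{B}_{PS}$ in place of $\mathbf{T}_1\mathbf{\Sigma}_1^{-1}$.

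The computation itself is mechanical; the only point requiring care — and hence the step I would treat as the genuine obstacle — is the legitimacy of the Girsanov step, namely that $\mathbb{Q}$ defined by $d\mathbb{Q}/d\mathbb{P}=Z_T$ is a bona fide equivalent probability measure and that the kernel really is supported on the first $m$ coordinates. Both are already secured upstream: membership $\mathbb{Q}\in\mathcal{M}$ comes from Theorem \ref{_Ref506079705}, and the coordinate structure of the kernel is forced by Assumption \ref{_Ref505971279}. Once these are in hand, no further analytic input is needed and the theorem follows from the displayed substitutions.
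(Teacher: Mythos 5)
Your proposal is correct and follows essentially the same route as the paper: the paper likewise takes $\tilde{Y}_T^*(y)=Z_T$ from Theorem \ref{_Ref506079705}, invokes Girsanov (citing Karatzas \& Shreve) to obtain the shift $d\hat{\mathbf{w}}_{1t}=\text{\boldmath$\theta$}\,dt+d\mathbf{w}_{1t}$, $d\hat{\mathbf{w}}_{2t}=d\mathbf{w}_{2t}$, substitutes into (3.1) to get (3.10), and then rewrites $\mathbf{T}_1\mathbf{\Sigma}_1^{-1}=\mathbf{T}_1\mathbf{\Sigma}_1^{\top}\left(\mathbf{\Sigma}_1\mathbf{\Sigma}_1^{\top}\right)^{-1}$ as the multiple-regression betas $\mathbf{B}_{PS}$ to state (3.11). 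Your explicit remark that the martingale property of $Z$ (and hence the legitimacy of Girsanov) is already secured by $Z_T$ being the density of $\mathbb{Q}\in\mathcal{M}$ is a point the paper leaves implicit, but it does not change the argument.
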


Setting (3.11) indicates that we must subtract the expected returns proportionate to the multiple-regression betas from the original drifts of all variables.
\footnote{For $i$th stock, its multiple-regression betas are defined in the same way as for performance variables. It equals to $m$-dimensional raw vector ${{\mathbf{1}}_{i}}$, the $i$th element of which is 1 and the other element is 0.}

We can use Theorem \ref{_Ref506107626} to obtain the marginal utility based-price of $B$, which satisfies the requirements of Lemma \ref{_Ref506067513} as the expected value of $BS_{0T}^{-1}$ with the stochastic process of (3.11).

\subsection{An optimal portfolio model}

The problem of Theorem \ref{_Ref506107626} in application is that we require $m$ expected excess returns and $\left( d-m \right)\times m$ betas, which are not easily estimated. We can establish a theorem with fewer parameters than Theorem \ref{_Ref506107626} by adding assumptions that are not restrictive in applying the theoretical price of SPPC.
\begin{assumption}
\label{_Ref505935801}
$r,\mathbf{b}$ and ${{\mathbf{\Sigma }}_{1}}$ are deterministic continuous functions of time for $\left[ 0,T \right]$.
\end{assumption}

Assumption \ref{_Ref505935801} excludes the models in which $r,\mathbf{b}$ and ${{\mathbf{\Sigma }}_{1}}$ depend on stock prices. However, such a model has no analytical advantage when evaluating the SPPC's price. Therefore, Assumption \ref{_Ref505935801} is not restrictive, and clearly implies Assumption \ref{_Ref505971279}.

Assumption \ref{_Ref505935801} leads to a strong claim regarding the optimal wealth process. We need additional technical conditions, which we explain in Appendix A, called \textit{Karatzas-Shreve conditions} (Karatzas \& Shreve 1998, Assumptions 3.8.1, 3.8.2).

\begin{lemma}

If Assumptions {\upshape2.1, 2.2, 2.3, 2.4, 3.2} and Karatzas-Shreve conditions hold, the ${{R}^{m}}$-valued process ${{\text{\boldmath$\pi $}}^{*}}$—representing the proportion of the optimal wealth process invested in each stock—can be expressed by
\begin{equation}
\label{ZEqnNum308252}
	{{\text{\boldmath$\pi $}}^{*}}=k{{\left( {{\mathbf{\Sigma }}_{1}}\mathbf{\Sigma }_{1}^{\top } \right)}^{-1}}\left( \mathbf{b}+\mathbf{d}-r{{\mathbf{1}}_{m}} \right)	
\end{equation}

{\leftline{where $k$ is some $R$-valued ${{\mathcal{F}}^{\left( 1 \right)}}$ adapted process.}}
\end{lemma}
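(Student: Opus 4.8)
The plan is to use Theorem 3.1 to collapse the problem onto the complete submarket (3.6)—where $r,\mathbf{b},\mathbf{\Sigma}_1$ are deterministic by Assumption 3.2 and $\mathbf{\Sigma}_1$ is square and invertible—and then to run the martingale (duality) method for utility maximization in a complete market, as developed in Karatzas \& Shreve (1998, Chapter 3). First I would observe that, by Theorem 3.1, the optimal wealth process $X^*$ is adapted to $\mathcal{F}^{(1)}$ and driven only by $\mathbf{w}_1$, so $\text{\boldmath$\pi$}^*$ is $\mathcal{F}^{(1)}$-adapted and charges only the $m$ tradable stocks. Writing $H_t:=Z_t S_{0t}^{-1}$ for the state-price density, the duality relation (3.7) identifies the optimal terminal wealth as $X^*_T=I(\lambda H_T)$, where $I:=\left(U'\right)^{-1}$ and $\lambda$ is fixed by the budget constraint $E\!\left[H_T X^*_T\right]=x$; positivity of $X^*$ (needed for the proportions to be well defined) follows from the Inada conditions in Assumption 2.2.

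Next I would compute the dynamics of the deflated optimal wealth $H_t X^*_t$. Using $dH_t=-H_t\!\left(r_t\,dt+\text{\boldmath$\theta$}^{\top}d\mathbf{w}_{1t}\right)$ and $dX^*_t=X^*_t\big(r_t\,dt+\text{\boldmath$\pi$}^{*\top}\mathbf{\Sigma}_1(\text{\boldmath$\theta$}\,dt+d\mathbf{w}_{1t})\big)$, Itô's product rule makes $H X^*$ a $\mathbb{P}$-martingale with diffusion coefficient $H_t X^*_t\!\left(\text{\boldmath$\pi$}^{*\top}\mathbf{\Sigma}_1-\text{\boldmath$\theta$}^{\top}\right)$. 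On the other hand $H_t X^*_t=E\!\left[H_T X^*_T\mid\mathcal{F}^{(1)}_t\right]$ is a martingale, so the martingale representation theorem gives $H_t X^*_t=E\!\left[H_T X^*_T\right]+\int_0^t\text{\boldmath$\psi$}_s^{\top}d\mathbf{w}_{1s}$ for some integrand $\text{\boldmath$\psi$}$. Matching the two diffusion coefficients yields
\begin{equation}
\text{\boldmath$\pi$}^*_t=\left(\mathbf{\Sigma}_1^{\top}\right)^{-1}\Big(\text{\boldmath$\theta$}_t+\tfrac{\text{\boldmath$\psi$}_t}{H_t X^*_t}\Big),
\end{equation}
so the whole claim reduces to showing that $\text{\boldmath$\psi$}_t$ is a scalar multiple of $\text{\boldmath$\theta$}_t$.

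This is where the Karatzas-Shreve conditions enter. Under them $X^*_T=I(\lambda H_T)$ is Malliavin differentiable, and the Clark-Ocone formula gives $\text{\boldmath$\psi$}_t=E\!\left[D_t\!\left(H_T X^*_T\right)\mid\mathcal{F}^{(1)}_t\right]$. The decisive point is that, because $\text{\boldmath$\theta$}$ is deterministic under Assumption 3.2, the Malliavin derivative factorizes as $D_t H_T=-\text{\boldmath$\theta$}_t H_T$, whence by the chain rule $D_t\big(H_T I(\lambda H_T)\big)=-\text{\boldmath$\theta$}_t H_T\big(I(\lambda H_T)+\lambda H_T I'(\lambda H_T)\big)$. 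Pulling the deterministic vector $\text{\boldmath$\theta$}_t$ out of the conditional expectation gives $\text{\boldmath$\psi$}_t=-\text{\boldmath$\theta$}_t\,g_t$, where $g_t:=E\!\left[H_T\big(I(\lambda H_T)+\lambda H_T I'(\lambda H_T)\big)\mid\mathcal{F}^{(1)}_t\right]$ is a scalar $\mathcal{F}^{(1)}$-adapted process. Substituting into the matched equation and using $\left(\mathbf{\Sigma}_1^{\top}\right)^{-1}\mathbf{\Sigma}_1^{-1}=\left(\mathbf{\Sigma}_1\mathbf{\Sigma}_1^{\top}\right)^{-1}$ together with $\text{\boldmath$\theta$}=\mathbf{\Sigma}_1^{-1}(\mathbf{b}+\mathbf{d}-r\mathbf{1}_m)$, I obtain $\text{\boldmath$\pi$}^*_t=k_t\left(\mathbf{\Sigma}_1\mathbf{\Sigma}_1^{\top}\right)^{-1}(\mathbf{b}+\mathbf{d}-r\mathbf{1}_m)$ with $k_t:=1-g_t/(H_t X^*_t)$, as claimed.

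The main obstacle is precisely this Malliavin/Clark-Ocone step: one must verify that $I(\lambda H_T)$ lies in the relevant Sobolev space, that differentiation under the conditional expectation is legitimate, and that the representation integrand $\text{\boldmath$\psi$}$ has the asserted explicit form—this is exactly what the Karatzas-Shreve regularity conditions of Appendix A are designed to guarantee. The remaining supporting points (integrability for the martingale representation, strict positivity and admissibility of $X^*$, and the deterministic factorization $D_t H_T=-\text{\boldmath$\theta$}_t H_T$) are routine once Assumption 3.2 has reduced $\text{\boldmath$\theta$}$ to a deterministic function of time.
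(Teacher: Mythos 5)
Your overall strategy is sound, and it is worth noting that the paper offers no internal argument for this lemma at all: its ``proof'' is a one-line citation to Karatzas \& Shreve (1998), Theorem 3.8.8, formula (3.8.24). Your skeleton --- reduce to the complete sub-market (3.6) via Theorem 3.1, identify $X_T^*=I(\lambda H_T)$ with $H_t=Z_tS_{0t}^{-1}$, write $H X^*$ as a $\mathbb{P}$-martingale, and match its diffusion coefficient $H_tX_t^*\left(\boldsymbol{\pi}_t^{*\top}\boldsymbol{\Sigma}_1-\boldsymbol{\theta}_t^{\top}\right)$ against the representation integrand $\boldsymbol{\psi}_t$ to show $\boldsymbol{\psi}_t$ is a scalar multiple of $\boldsymbol{\theta}_t$ --- is exactly the mechanism behind the cited theorem, so you have reconstructed the right proof shape.

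There is, however, a genuine gap at the decisive step. Your Clark--Ocone computation differentiates $I=(U')^{-1}$ via the chain rule, but Assumption 2.2 makes $U$ only \emph{once} continuously differentiable; strict concavity yields $I$ continuous and strictly decreasing, not differentiable (its slope can be infinite), and nothing in Appendix A --- H\"older continuity and the two-sided bound $k_1\le\|\boldsymbol{\theta}(t)\|\le k_2$, polynomial growth of $I$ and of $U(I)$ --- supplies the Sobolev regularity of $I(\lambda H_T)$ that your step needs. So your closing claim that the Karatzas--Shreve conditions ``are designed to guarantee'' the Malliavin step is not correct; they are designed for a different repair. Karatzas--Shreve avoid differentiating $I$ entirely: with $r$ and $\|\boldsymbol{\theta}\|$ deterministic, the ratio $H_T/H_t$ is conditionally lognormal with deterministic, nondegenerate variance (bounded below by $k_1^2(T-t)$ thanks to (A.2)), so the function $G(t,y):=E\left[(H_T/H_t)\,I\left(\lambda y H_T/H_t\right)\right]$ inherits $C^{1,2}$ smoothness from the Gaussian kernel regardless of the smoothness of $I$; then $H_tX_t^*=yG(t,y)\big|_{y=H_t}$ is a smooth function of the Markov process $H$, and It\^o's formula identifies the martingale integrand as $\partial_y(yG)(t,H_t)\,(-H_t\boldsymbol{\theta}_t)$ --- proportional to $\boldsymbol{\theta}_t$ with a scalar $\mathcal{F}^{(1)}$-adapted factor, which is your $k$. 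A secondary over-assumption: your factorization $D_tH_T=-\boldsymbol{\theta}_tH_T$ and the extraction of $\boldsymbol{\theta}_t$ from the conditional expectation require $\boldsymbol{\theta}$ fully deterministic, hence the dividend yield $\mathbf{d}$ deterministic, which Assumption 3.2 does not state (it covers only $r,\mathbf{b},\boldsymbol{\Sigma}_1$); the Markovian argument needs only $r$ and $\|\boldsymbol{\theta}\|$ deterministic, which is what Appendix A actually posits. Replacing your Clark--Ocone step by this Markov/Feynman--Kac representation closes both holes without changing anything else in your argument.
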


\begin{proof}
Karatzas \& Shreve (1998) Theorem 3.8.8 (3.8.24).			        
\end{proof}

We have the following theorem:

\begin{theorem}
Let $b_{\pi }^{*}-r:={{\text{\boldmath$\pi $}}^{*}}^{\top }\left( \mathbf{b}+\mathbf{d}-r{{\mathbf{1}}_{m}} \right)$ be the expected excess return of the optimal wealth process, and $\text{\boldmath$\beta $}{{_{P}^{*}}_{\pi }}:={{\left( \beta {{_{1\pi}^{*}}},\cdots ,\beta {{_{d-m\pi}^{*}}} \right)}^{\top }}$be a $\left( d-m \right)$-dimensional column vector whose $j$th element is a beta of a single-regression model that explains ${d{{P}_{j}}_{t}}/{{{P}_{j}}_{t}}\;-{{c}_{j}}dt\ \left( j=1,\cdots ,d-m \right)$ by ${dX_{t}^{*}}/{X_{t}^{*}}\;-b_{\pi }^{*}dt$, where ${{X}^{*}}$ is the optimal wealth process. If Assumptions {\upshape2.1, 2.2, 2.3, 2.4, 3.2} and Karatzas-Shreve conditions hold, then we have
\begin{equation}
\label{ZEqnNum183557}
	\begin{aligned}
  & d{{S}_{0}}_{t}={{S}_{0}}_{t}{{r}_{t}}dt, \\ 
 & d{{\mathbf{S}}_{t}}=diag\left( {{\mathbf{S}}_{t}} \right)\left( \left( r{{\mathbf{1}}_{m}}-\mathbf{d} \right)dt+{{\mathbf{\Sigma }}_{1}}d\mathbf{\hat{w}}{{}_{1}}_{t} \right), \\ 
 & d{{\mathbf{P}}_{t}}=diag\left( {{\mathbf{P}}_{t}} \right)\left( \left( \mathbf{c}-\text{\boldmath$\beta $}{{_{P}^{*}}_{\pi }}\left( b_{\pi }^{*}-r \right) \right)dt+{{\mathbf{T}}_{1}}d\mathbf{\hat{w}}{{}_{1}}_{t}+{{\mathbf{T}}_{2}}d\mathbf{\hat{w}}{{}_{2}}_{t} \right).  
\end{aligned}	
\end{equation}
\end{theorem}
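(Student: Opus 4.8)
The plan is to deduce this theorem from Theorem~3.2 by a single algebraic identity. Equation (3.13) agrees with the $\mathbb{Q}$-dynamics (3.11) of Theorem~3.2 in the $dS_{0}$ and $d\mathbf{S}$ lines verbatim, and differs only in the drift of $\mathbf{P}$: (3.11) carries $\mathbf{c}-\mathbf{B}_{PS}\left( \mathbf{b}+\mathbf{d}-r\mathbf{1}_m \right)$ with $\mathbf{B}_{PS}=\mathbf{T}_1\mathbf{\Sigma}_1^{-1}$, whereas (3.13) carries $\mathbf{c}-\text{\boldmath$\beta$}^{*}_{P\pi}\left( b_{\pi}^{*}-r \right)$. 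Hence the whole statement follows once I verify, writing $\mathbf{e}:=\mathbf{b}+\mathbf{d}-r\mathbf{1}_m$ and $\mathbf{V}:=\mathbf{\Sigma}_1\mathbf{\Sigma}_1^{\top}$,
\[
\text{\boldmath$\beta$}^{*}_{P\pi}\left( b_{\pi}^{*}-r \right)=\mathbf{T}_1\mathbf{\Sigma}_1^{-1}\mathbf{e}.
\]

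First I would fix the diffusion of the optimal wealth process. A self-financing portfolio holding the proportion $\text{\boldmath$\pi$}^{*}$ of its value in the stocks (on a total-return basis, so stock $i$ contributes $\left( b_i+d_i \right)dt$ in drift) and the rest in the money market obeys
\[
\frac{dX^{*}}{X^{*}}=\left( r+{\text{\boldmath$\pi$}^{*}}^{\top}\mathbf{e} \right)dt+{\text{\boldmath$\pi$}^{*}}^{\top}\mathbf{\Sigma}_1\,d\mathbf{w}_1,
\]
so its excess return is $b_{\pi}^{*}-r={\text{\boldmath$\pi$}^{*}}^{\top}\mathbf{e}$, matching the definition in the statement, and its diffusion is ${\text{\boldmath$\pi$}^{*}}^{\top}\mathbf{\Sigma}_1\,d\mathbf{w}_1$. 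Only the diffusion enters a regression slope. Since the random part of $dP_{j}/P_{j}$ is the $j$th entry of $\mathbf{T}_1 d\mathbf{w}_1+\mathbf{T}_2 d\mathbf{w}_2$ and $\mathbf{w}_1,\mathbf{w}_2$ are independent, the instantaneous covariance of $dP_{j}/P_{j}$ with $dX^{*}/X^{*}$ is the $j$th entry of $\mathbf{T}_1\mathbf{\Sigma}_1^{\top}\text{\boldmath$\pi$}^{*}$, while the instantaneous variance of $dX^{*}/X^{*}$ is ${\text{\boldmath$\pi$}^{*}}^{\top}\mathbf{V}\text{\boldmath$\pi$}^{*}$. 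Stacking the single-regression slopes over $j$ gives
\[
\text{\boldmath$\beta$}^{*}_{P\pi}=\frac{\mathbf{T}_1\mathbf{\Sigma}_1^{\top}\text{\boldmath$\pi$}^{*}}{{\text{\boldmath$\pi$}^{*}}^{\top}\mathbf{V}\text{\boldmath$\pi$}^{*}}.
\]

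Next I would substitute the closed form $\text{\boldmath$\pi$}^{*}=k\mathbf{V}^{-1}\mathbf{e}$ from Lemma~3.1 (equation (3.12)). The point is that the optimal portfolio is collinear with the mean-variance-efficient direction $\mathbf{V}^{-1}\mathbf{e}$, so all scalars collapse: the numerator becomes $k\,\mathbf{T}_1\mathbf{\Sigma}_1^{\top}\mathbf{V}^{-1}\mathbf{e}$, the excess return becomes $b_{\pi}^{*}-r=k\,\mathbf{e}^{\top}\mathbf{V}^{-1}\mathbf{e}$, and the variance becomes ${\text{\boldmath$\pi$}^{*}}^{\top}\mathbf{V}\text{\boldmath$\pi$}^{*}=k^{2}\,\mathbf{e}^{\top}\mathbf{V}^{-1}\mathbf{e}$. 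Forming the product $\text{\boldmath$\beta$}^{*}_{P\pi}\left( b_{\pi}^{*}-r \right)$, the factors $k$ and the scalar $\mathbf{e}^{\top}\mathbf{V}^{-1}\mathbf{e}$ cancel and leave $\mathbf{T}_1\mathbf{\Sigma}_1^{\top}\mathbf{V}^{-1}\mathbf{e}$. Because $\mathbf{\Sigma}_1$ is square and invertible, $\mathbf{\Sigma}_1^{\top}\mathbf{V}^{-1}=\mathbf{\Sigma}_1^{\top}\left( \mathbf{\Sigma}_1\mathbf{\Sigma}_1^{\top} \right)^{-1}=\mathbf{\Sigma}_1^{-1}$, so this equals $\mathbf{T}_1\mathbf{\Sigma}_1^{-1}\mathbf{e}=\mathbf{B}_{PS}\mathbf{e}$, the desired identity; inserting it into (3.11) yields (3.13).

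The main obstacle is not the algebra but making the cancellation legitimate and the regression well posed. I need $\mathbf{e}\neq 0$ and $k\neq 0$ so that $b_{\pi}^{*}-r$ and ${\text{\boldmath$\pi$}^{*}}^{\top}\mathbf{V}\text{\boldmath$\pi$}^{*}$ are nonzero and $\text{\boldmath$\beta$}^{*}_{P\pi}$ is defined; in the degenerate case $\mathbf{e}=0$ both sides of the identity vanish and (3.13) collapses to (3.11) trivially. I would also state explicitly that ``covariance'' and ``variance'' here mean instantaneous quadratic-covariation quantities per unit time, and note that it is precisely Assumption~3.2 together with the Karatzas--Shreve conditions that deliver the closed form of $\text{\boldmath$\pi$}^{*}$ in Lemma~3.1; without that explicit collinear form the scalar cancellation driving the result would be unavailable.
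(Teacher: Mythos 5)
Your proposal is correct and follows essentially the same route as the paper: both rest on the collinearity $\text{\boldmath$\pi$}^{*}=k\left( \mathbf{\Sigma}_{1}\mathbf{\Sigma}_{1}^{\top} \right)^{-1}\left( \mathbf{b}+\mathbf{d}-r\mathbf{1}_{m} \right)$ from Lemma~3.1, identify $\mathbf{T}_{1}\mathbf{\Sigma}_{1}^{\top}\text{\boldmath$\pi$}^{*}$ and ${\text{\boldmath$\pi$}^{*}}^{\top}\mathbf{\Sigma}_{1}\mathbf{\Sigma}_{1}^{\top}\text{\boldmath$\pi$}^{*}$ as the instantaneous covariances and variance defining the single-regression betas, and conclude $\mathbf{T}_{1}\mathbf{\Sigma}_{1}^{-1}\left( \mathbf{b}+\mathbf{d}-r\mathbf{1}_{m} \right)=\text{\boldmath$\beta$}{{_{P}^{*}}_{\pi}}\left( b_{\pi}^{*}-r \right)$ before substituting into the Theorem~3.2 dynamics; your direct substitution of the closed form of $\text{\boldmath$\pi$}^{*}$ with scalar cancellation is just a reorganization of the paper's elimination of $k$ between (3.14) and (3.15). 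Your explicit attention to nondegeneracy ($\mathbf{e}\neq 0$, $k\neq 0$) is a sound refinement the paper leaves implicit (note that the Karatzas--Shreve bound $k_{1}\le \left\| \text{\boldmath$\theta$} \right\|$ already rules out $\mathbf{e}=0$).
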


\begin{proof}
Multiplying ${{\mathbf{\Sigma }}_{1}}\mathbf{\Sigma }_{1}^{\top }$ on both sides of (3.12) from the left, we have
\begin{equation}
\label{ZEqnNum823454}
	{{\mathbf{\Sigma }}_{1}}\mathbf{\Sigma }_{1}^{\top }{{\text{\boldmath$\pi $}}^{*}}=k\left( \mathbf{b}+\mathbf{d}-r{{\mathbf{1}}_{m}} \right).	
\end{equation}

{\leftline{Multiplying ${{\text{\boldmath$\pi $}}^{*}}^{\top }$ on both sides of (3.14) from the left, we have}}
\begin{equation}
\label{ZEqnNum440614}
	{{\text{\boldmath$\pi $}}^{*}}^{\top }{{\mathbf{\Sigma }}_{1}}\mathbf{\Sigma }_{1}^{\top }{{\text{\boldmath$\pi $}}^{*}}=k{{\text{\boldmath$\pi $}}^{*}}^{\top }\left( \mathbf{b}+\mathbf{d}-r{{\mathbf{1}}_{m}} \right)=k\left( b_{\pi }^{*}-r \right).	
\end{equation}

{\leftline{Deleting $k$ from (3.14) and (3.15), we have}}
\begin{equation}
\label{ZEqnNum369679}
	\mathbf{b}+\mathbf{d}-r{{\mathbf{1}}_{m}}=\frac{{{\mathbf{\Sigma }}_{1}}\mathbf{\Sigma }_{1}^{\top }{{\text{\boldmath$\pi $}}^{*}}}{{{\text{\boldmath$\pi $}}^{*}}^{\top }{{\mathbf{\Sigma }}_{1}}\mathbf{\Sigma }_{1}^{\top }{{\text{\boldmath$\pi $}}^{*}}}\left( b_{\pi }^{*}-r \right).	
\end{equation}

{\leftline{Multiplying ${{\mathbf{T }}_{1}}\mathbf{\Sigma }_{1}^{-1}$ on both sides of (3.16) from the left, we have}}

\begin{equation}
	{{\mathbf{T }}_{1}}\mathbf{\Sigma }_{1}^{-1}\left( \mathbf{b}+\mathbf{d}-r{{\mathbf{1}}_{m}} \right)=\frac{{{\mathbf{T}}_{1}}\mathbf{\Sigma }_{1}^{\top }{{\text{\boldmath$\pi $}}^{*}}}{{{\text{\boldmath$\pi $}}^{*}}^{\top }{{\mathbf{\Sigma }}_{1}}\mathbf{\Sigma }_{1}^{\top }{{\text{\boldmath$\pi $}}^{*}}}\left( b_{\pi }^{*}-r \right).	
\end{equation}

{\flushleft{As the $\mathbf{\Sigma }_{1}^{\top }{{\text{\boldmath$\pi $}}^{*}}$ appearing on the right side is a transposition of the volatility vector of the optimal portfolio ${{\text{\boldmath$\pi $}}^{*}}$, ${{\mathbf{T}}_{1}}\mathbf{\Sigma }_{1}^{\top }{{\text{\boldmath$\pi $}}^{*}}$ is a $\left( d-m \right)$-dimensional column vector, the $j$th element of which is a covariance between ${d{{P}_{j}}_{t}}/{{{P}_{j}}_{t}}\;-{{c}_{j}}dt\ \left( j=1,\cdots ,d-m \right)$ and ${dX_{t}^{*}}/{X_{t}^{*}}\;-b_{\pi }^{*}dt$. The matrix ${{\text{\boldmath$\pi $}}^{*}}^{\top }{{\mathbf{\Sigma }}_{1}}\mathbf{\Sigma }_{1}^{\top }{{\text{\boldmath$\pi $}}^{*}}$ in the denominator on the right side is the variance of ${d\hat{X}}/{{\hat{X}}}\;$. Therefore, we have}}
\begin{equation}
\label{ZEqnNum762796}
	{{\mathbf{T }}_{1}}\mathbf{\Sigma }_{1}^{-1}\left( \mathbf{b}+\mathbf{d}-r{{\mathbf{1}}_{m}} \right)=\text{\boldmath$\beta $}{{_{P}^{*}}_{\pi }}\left( b_{\pi }^{*}-r \right).	
\end{equation}

{\leftline{Substituting (3.18) into (3.10) leads to (3.13).}}
\end{proof}

By Theorem 3.3, we need not estimate the expected excess returns and multiple-regression betas of stocks. Instead, it is sufficient to estimate the expected excess returns and single-regression beta of the optimal portfolio. This is because the expected excess returns and the betas of the stocks invested in an optimal portfolio, consistent with those of the optimal portfolio, automatically reflect on the performance variable drift adjustments.

In practice, enough precedent exists to adopt a stock index as the optimal portfolio. Many studies have confirmed that it is difficult to find active management funds that stably outperform the stock index (Carhart 1997, Fama \& French 2010). When considering an investor whose optimal portfolio is the stock index, we can regard such an investor's estimation as both long-term and qualified in the above sense.

As many financial institutions announce prospective market index returns, it is relatively easy to estimate the market index's expected excess returns.

\subsection{Two features of the optimal measure}
The optimal measure implies two features in the stochastic processes (3.11) and (3.13). The first feature is that the performance variable's drift term should change by the expected excess returns on invested stocks or the optimal portfolio multiplied by their betas. Economically, this change occurs because the relative advantage of the investment in the SPPC, compared to that in the tradable assets, affects the SPPC's theoretical price. For an investor who purchases the SPPC by decreasing wealth investing in tradable assets, the higher the stocks' expected excess returns, the higher the SPPC's expected return must be. Ultimately, the theoretical price must decrease; therefore, the probability of achieving goals must also decrease. In adjusting the performance variables' drift terms along with (3.11) or (3.13), the higher the stocks' expected excess returns, the lower the probability that the performance variables will achieve goals and the price; thus, it is consistent. If the price does not decrease, the investor can increase the expected utility by selling the SPPC short.

Neglecting the betas means we assume the investor is risk neutral, which is empirically illogical and leads to the conclusion that the investor can increase the expected utility if the stocks' excess returns are positive. We must be careful not to unconsciously fall into such an unreasonable assumption.

We must estimate betas carefully. If the performance variable includes sales or profits and we use the market index as the optimal portfolio, the betas are positive, as the market index follows the same direction as the economic trend, and sales and profits are affected by economic trends. 

The second feature is that the performance variables' drift adjustments do not depend on the type of utility function. Consider that we used the utility function to define the marginal utility-based price; it is therefore unexpected that the drift adjustment in (3.11) or (3.13) does not need information on the utility function. 

There are two reasons for this occurrence. First, although $\tilde{Y}_{T}^{*}\left( y \right)$ depends on the utility function in general, it does not under Assumption \ref{_Ref505971279} (see Theorem \ref{_Ref506107626}). Second, we define the marginal utility-based price with $q=0$. The marginal utility with respect to $q$ is
\begin{equation}
\label{ZEqnNum779844}
	\frac{\partial }{\partial q}E\left[ U\left( X_{T}^{*}+qB \right) \right]-\frac{\partial }{\partial q}u\left( x-qp \right)=E\left[ {U}'\left( X_{T}^{*}+qB \right)B \right]-{u}'\left( x-qp \right)p.	
\end{equation}

{\flushleft{The first term is a marginal expected utility due to the change in terminal wealth, and the second term is as such due to the change in initial wealth invested in tradable assets. Evaluating (3.19) at $q=0$ leads to}}
\begin{equation}
\label{ZEqnNum122858}
	E\left[ {U}'\left( X_{T}^{*} \right)B \right]-{u}'\left( x \right)p.	
\end{equation}

{\leftline{Substituting (2.12) and $y={u}'\left( x \right)$ into the above equation, we have}}
\begin{equation}
	E\left[ {U}'\left( X_{T}^{*} \right)B \right]-{u}'\left( x \right)p=E\left[ y{{{\tilde{Y}}}^{*}}\left( y \right)B S_{0T}^{-1}\right]-yp=y\left( E\left[ {{{\tilde{Y}}}^{*}}\left( y \right)B S_{0T}^{-1}\right]-p \right).	
\end{equation}

{\leftline{The marginal expected utility equals zero when $p=E\left[ {{{\tilde{Y}}}^{*}}\left( y \right)B S_{0T}^{-1}\right]$.}}

Simultaneously, we can see that if $q\ne 0$, then
\begin{equation}
\label{ZEqnNum358531}
	\frac{\partial }{\partial q}E\left[ U\left( X_{T}^{*}+qB \right) \right]-\frac{\partial }{\partial q}u\left( x-qp \right)\ne y\left( E\left[ {{{\tilde{Y}}}^{*}}\left( y \right)B S_{0T}^{-1}\right]-p \right).	
\end{equation}

{\flushleft{This inequality means that if we define a marginal utility-based price at a non-zero $q$ (the utility indifference price), then $E\left[ {{{\tilde{Y}}}^{*}}\left( y \right)B S_{0T}^{-1}\right]$ is not a price that makes the marginal expected utility equal zero at a non-zero $q$. We can guess that a utility indifference price will depend on the utility function and initial wealth $x$ even if Assumption \ref{_Ref505971279} holds. However, no agreement seemingly exists regarding the type of utility function and what size of initial wealth is appropriate. This disagreement means the marginal utility-based price at $q=0$ is the only sensible definition of the theoretical price.}}

\section{The Theoretical Price of SPPC }
\subsection{Formula of the theoretical price of SPPC}

We then formulate SPPC's theoretical price by considering only stocks and performance variables that affect the SPPC payoff. The difference between stocks and performance variables is that the former can be traded on the market, but the latter cannot. Thus, 
\begin{equation}
\label{ZEqnNum978656}
	\begin{aligned}
  & d{{S}_{0}}_{t}={{S}_{0}}_{t}{{r}_{t}}dt, \\ 
 & d{{\mathbf{S}}_{t}}=diag\left( {{\mathbf{S}}_{t}} \right)\left( \left( {{r}_{t}}{{\mathbf{1}}_{m}}-{{\mathbf{d}}_{t}} \right)dt+{{\mathbf{\Sigma }}_{1}}_{t}d\mathbf{\hat{w}}{{}_{1}}_{t} \right), \\ 
 & d{{\mathbf{P}}_{t}}=diag\left( {{\mathbf{P}}_{t}} \right)\left( {{\text{\boldmath$\mu $}}_{t}}dt+{{\mathbf{T}}_{t}}d{{{\mathbf{\hat{w}}}}_{t}} \right). \\ 
\end{aligned}	
\end{equation}

{\flushleft{where ${{\text{\boldmath$\mu $}}_{t}}$ is the drift coefficient after the beta adjustments in (3.11) or (3.13). We used the same notations as (3.11) in a different meaning to simplify the description. Further, we consider the issuer's stock as well as all stocks that affect the payoff. For example, if the formula for the exercise price is }}
\begin{equation}
	\text{the issuer's stock price on the grant date }\!\!\times\!\!\text{ }\frac{\text{a market index on the vesting date}}{\text{a market index on the grant date}},	
\end{equation}

{\leftline{we include a stock index in the stocks. If the formula for the exercise price is }}
\begin{equation}
\begin{aligned}
   \text{the issuer's stock price on the }&\text{grant date} \\ 
 & \text{ }\!\!\times\!\!\text{ }\frac{\text{a competitor }\!\!'\!\!\text{ s stock price on the vesting date}}{\text{a competitor }\!\!'\!\!\text{ s stock price on the grant date}}, 
\end{aligned}
\end{equation}

{\leftline{we include the competitor's stock in the stocks. }}

Let $m$ and $n$ be the number of stocks and performance variables in (4.1), respectively, and $d:=m+n$. Let ${{T}_{v}}<T$ be the vesting date. 

For $1\le i\le n$, denote by ${{I}_{i}}$ the variable that assumes a value of one when the condition of the $i$th performance variable is satisfied, and zero when it is not. We can describe ${{I}_{i}}$ by using indicator functions and denoting each goal by ${{K}_{i}}$, as follows:
\begin{equation}
	{{I}_{i}}={{1}_{\left\{ {{P}_{iSUM}}\ge {{K}_{i}} \right\}}}\text{ for }i\in {{N}_{1}}\text{ }	
\end{equation}

{\leftline{where ${{P}_{iSUM}}:=\int_{{{a}_{i}}}^{{{b}_{i}}}{{{P}_{iu}}}du\text{ and }\left[ {{a}_{i}},{{b}_{i}} \right]\subset \left[ 0,{{T}_{v}} \right]\text{ }$is the accounting period,}}
\begin{equation}
	{{I}_{i}}={{1}_{\left\{ {{{P}_{i}}\left( {{t}_{i}}_{k} \right)}/{{{P}_{i}}\left( {{t}_{i}}_{k-1} \right)}\;\ge {{K}_{i}} \right\}}},\ 1\le k\le {{L}_{i}}\text{ for }i\in {{N}_{2}}	
\end{equation}

{\flushleft{where $\left( 0={{t}_{i}}_{_{0}}<\ldots <{{t}_{i}}_{_{1}}<\ldots <{{t}_{i}}_{_{{{L}_{i}}}} \right)\subset \left[ 0,{{T}_{v}} \right]$ are time points to judge success or failure, and}}
\begin{equation}
\label{ZEqnNum171476}
	\text{ }{{I}_{i}}={{1}_{\left\{ {{P}_{i}}\left( {{T}_{v}} \right)\ge {{K}_{i}} \right\}}}\text{ for }i\in {{N}_{3}}.	
\end{equation}

{\leftline{We call ${{P}_{iSUM}}$ a \textit{period-sum.}}}

We use $C$ to denote the payoff determined at time $T$ of the award if the conditions are satisfied. Moreover, $C$ is an $\mathcal{F}_{T}^{\left( 1 \right)}$ measurable random variable. The form of $C$ depends on the award's content. If the award provides stocks, then $C={{S}_{T}}$, where ${{S}_{T}}$ is the issuer's stock price at time $T$. If the award provides stock options, the form of $C$ can be specified only after the option's content is specified. It might be a simple American call option with an exercise price $K$ and payoff $\max \left( {{S}_{t}}-K,0 \right)\exp \left( \int_{t}^{T}{{{r}_{u}}du} \right)$ for ${{T}_{v}}\le t\le T$, or it might be an exotic option (see Shreve 2004 for the pricing of such options).

The SPPC payoff is
\footnote{
If the excise price depends on the performance condition, the payoff of such an option is $\max ({{S}_{T}}-{{K}_{1}}{{1}_{\left\{ P\le K \right\}}}-{{K}_{2}}{{1}_{\left\{ P\ge K \right\}}})$ where ${{K}_{1}}$ and ${{K}_{2}}$ are exercise prices and $K$ is a performance goal, which is the same as $\max ({{S}_{T}}-{{K}_{1}}){{1}_{\left\{ P\le K \right\}}}+\max ({{S}_{T}}-{{K}_{2}}){{1}_{\left\{ P\ge K \right\}}}$. Thus, such an option is a group of options with a payoff of $\max ({{S}_{T}}-{{K}_{1}}){{1}_{\left\{ P\le K \right\}}}$ and $\max ({{S}_{T}}-{{K}_{2}}){{1}_{\left\{ P\ge K \right\}}}$. We can analyze this in the same manner as (4.7).
}

\begin{equation}
\label{ZEqnNum910334}
	B:=C\prod\limits_{i=1}^{n}{{{I}_{i}}} .	
\end{equation}

{\flushleft{The product of ${{I}_{i}}$s means that multiple goals must be simultaneously achieved for the award to be vested.}}

We can observe that (4.7) will satisfy the requirements of Lemma \ref{_Ref506067513}. First, we will demonstrate there is a maximal strategy $X\in \mathcal{X}\left( 1 \right)$, satisfying the condition $B\le a{{X}_{T}}$ with some constant $a>0$.

As $C$ is adapted to $\mathcal{F}_{T}^{\left( 1 \right)}$, using the standard discussion of the complete market reveals there is a wealth process $X\in \mathcal{X}\left( 1 \right)$ such that $C=x{{X}_{T}}$ where $x={{E}^{Q}}\left[ CS_{0T}^{-1} \right]$; thus, $B=C\prod\limits_{i=1}^{n}{{{I}_{i}}}\le C=x{{X}_{T}}$. Further, we indicate such $X$ is a maximal strategy. If this $X$ is not the case, then an admissible strategy ${X}'\in \mathcal{X}\left( 1 \right)$ exists that dominates $X$, namely, ${{X}_{T}}\le {{{X}'}_{T}}$ and ${{X}_{T}}<{{{X}'}_{T}}$ with some positive probability. Multiplying both sides by $S_{0T}^{-1}{{Z}_{T}}$ and taking the expected value of both sides under $\mathbb{P}$, we have ${{E}^{Q}}\left[ {{{\tilde{X}}}_{T}} \right]<{{E}^{Q}}\left[ {{{\tilde{{X}'}}}_{T}} \right]$. However, as both $X$ and ${X}'$ are elements of $\mathcal{X}\left( 1 \right)$, ${{E}^{Q}}\left[ {{{\tilde{X}}}_{T}} \right]={{E}^{Q}}\left[ {{{\tilde{{X}'}}}_{T}} \right]=1$, which is a contradiction. Therefore, $X$ is a maximal strategy.

Next, we will demonstrate that regarding the above maximal strategy $X$, $Z\tilde{X}$ is a uniformly integrable martingale. The relative wealth process $\tilde{X}$ satisfies
\begin{equation}
	d{{\tilde{X}}_{t}}={{\tilde{X}}_{t}}\text{\boldmath$\pi $}{{\mathbf{\Sigma }}_{1}}d\mathbf{\hat{w}}{{}_{1}}_{t}	
\end{equation}

{\flushleft{where $\text{\boldmath$\pi $}$ represents the proportion of current wealth invested in each stock (Shreve 2004, 5.2.27). As the Ito Integral is a martingale (Shreve 2004, Theorem 4.3.1(iv)), $\tilde{X}$ is a $\mathbb{Q}$-martingale and $Z\tilde{X}$ is a $\mathbb{P}$-martingale. Thus, we have ${{Z}_{t}}{{\tilde{X}}_{t}}=E\left[ {{Z}_{T}}{{{\tilde{X}}}_{T}}\left| {{\mathcal{F}}_{t}} \right. \right]$. Further, $E\left[ \left| {{Z}_{T}}{{{\tilde{X}}}_{T}} \right| \right]<\infty $, because}}
\begin{equation}
	E\left[ \left| {{Z}_{T}}{{{\tilde{X}}}_{T}} \right| \right]=E\left[ {{Z}_{T}}{{{\tilde{X}}}_{T}} \right]=E\left[ {{Z}_{T}}{{{\tilde{X}}}_{T}}\left| {{\mathcal{F}}_{0}} \right. \right]={{Z}_{0}}{{\tilde{X}}_{0}}=1.	
\end{equation}

{\flushleft{The conditional expectation of a random variable in ${{L}^{1}}$, given ${{\mathcal{F}}_{t}}$, is a uniformly integrable martingale (Williams 1991, 13.4).}}

As we have confirmed that the SPPC's payoff satisfies all requirements of Lemma \ref{_Ref506067513}, we can apply Lemma \ref{_Ref506067513} to the SPPC's payoff. The SPPC's price is
\begin{equation}
	p:={{E}^{Q}}\left[ S_{0T}^{-1}C\prod\limits_{i=1}^{n}{{{I}_{i}}} \right].	
\end{equation}

\subsection{How to improve the estimation}

As no analytical formula exists to express the probability distribution of the period-sum ${{P}_{iSUM}}$, we must calculate (4.10) using the Monte Carlo method if the SPPC contains a type 1 performance variable.

The concept of a {\textit{period-product}} of the performance variable is useful in evaluating the theoretical price. We define a period-product of the performance variable as 
\begin{equation}
	{{P}_{iPROD}}:=\left( {{b}_{i}}-{{a}_{i}} \right)\exp \left( \frac{1}{{{b}_{i}}-{{a}_{i}}}\int_{{{a}_{i}}}^{{{b}_{i}}}{\log {{P}_{iu}}}du \right).	
\end{equation}

{\flushleft{We can indicate the meaning of ${{{P}_{iPROD}}}/{\left( {{b}_{i}}-{{a}_{i}} \right)}\;$ as a continuous time version of the geometric mean of the performance variables at discrete time points. Divide a period $[a,b]$ into ${{t}_{0}}={{a}_{i}},\cdots ,{{t}_{k}}={{a}_{i}}+k\Delta t,\cdots {{t}_{n}}={{b}_{i}}$ $\left( k=0,\cdots ,n \right)$ where $\Delta t:={\left( {{b}_{i}}-{{a}_{i}} \right)}/{n}\;$. The geometric mean of $\left( {{P}_{i}}{{_{a}}_{_{i}}},\cdots ,{{P}_{i}}_{{{t}_{k}}},\cdots ,{{P}_{i}}{{_{b}}_{_{i}}} \right)$ is}}
\begin{equation}
\label{ZEqnNum122569}
	{{\left( \prod\limits_{k=0}^{n}{{{P}_{{{t}_{k}}}}} \right)}^{\frac{1}{n+1}}}=:G.	
\end{equation}

{\leftline{Then}}
\begin{equation}
\begin{aligned}
  \log G=\frac{1}{n+1}\sum\limits_{k=0}^{n}{\log {{P}_{{{t}_{k}}}}}
 & =\frac{n}{n+1}\frac{1}{{{b}_{i}}-{{a}_{i}}}\sum\limits_{k=0}^{n}{\log {{P}_{{{t}_{k}}}}}\Delta t \\ 
 & \to \frac{1}{{{b}_{i}}-{{a}_{i}}}\int_{{{a}_{i}}}^{{{b}_{i}}}{\log {{P}_{iu}}}du\text{ as }n\to \infty . \\ 
\end{aligned}
\end{equation}

{\leftline{Thus, we have}}
\begin{equation}
	G\to \exp \left( \frac{1}{{{b}_{i}}-{{a}_{i}}}\int_{{{a}_{i}}}^{{{b}_{i}}}{\log {{P}_{iu}}}du \right)=\frac{{{P}_{iPROD}}}{{{b}_{i}}-{{a}_{i}}}\text{ }\!\!~\!\!\text{ as }n\to \infty \text{.}	
\end{equation}

Henceforth, we will call the original SPPC conditional to the period-sum the {\textit{period-sum payment}} and the SPPC conditional to the period-product the {\textit{period-product payment}}. We assume ${{L}_{i}}=1\ for \ i\in {{N}_{2}}$ to simplify the description. Let ${{N}_{0}}:={{\left( i \right)}_{1\le i\le m}}$. The payoff of the period-product payment is
\begin{equation}
\label{ZEqnNum961749}
	C\prod\limits_{i=1}^{n}{{{{{I}'}}_{i}}}	
\end{equation}

{\leftline{where ${{{I}'}_{i}}={{1}_{\left\{ {{P}_{iPROD}}\ge {{K}_{i}} \right\}}}$ for $i\in {{N}_{1}}$ and ${{{I}'}_{i}}={{I}_{i}}$ for $i\in {{N}_{2}}\cup {{N}_{3}}$.}}

As we will note in Appendix B, a $d$-dimensional random vector
\begin{equation}
\label{ZEqnNum488418}
	\mathbf{x}:={{\left( {{\left( \log {{S}_{i}}\left( T \right) \right)}_{i\in {{N}_{0}}}},{{\left( \log {{P}_{iPROD}} \right)}_{i\in {{N}_{1}}}},{{\left( \log {{{P}_{i}}\left( {{t}_{i}}_{1} \right)}/{{{P}_{i}}\left( {{t}_{i}}_{0} \right)}\; \right)}_{i\in {{N}_{2}}}},{{\left( \log {{P}_{i}}\left( {{T}_{v}} \right) \right)}_{i\in {{N}_{3}}}} \right)}^{\top }}	
\end{equation}

{\flushleft{has a $d$-dimensional normal distribution, for which an analytical formula using instantaneous parameters exists.}}

The period-product payment price is
\begin{equation}
	{{p}_{PROD}}:={{E}^{Q}}\left[ S_{0T}^{-1}C\prod\limits_{i=1}^{n}{{{{{I}'}}_{i}}} \right].	
\end{equation}

We can calculate ${{p}_{PROD}}$ more quickly and accurately by using the analytical distribution formula of $\mathbf{x}$ rather than the Monte Carlo method, which uses random numbers step by step with many paths. We call this price estimated with the analytical distribution formula a {\textit{quasi-analytical theoretical price}}. In the case of $n=1$ and $m=1$, we can calculate this by combining the analytical formula and numerical integral. In other cases, we can directly generate the samples of $\mathbf{x}$ as random numbers. This calculation takes much less time to obtain the same number of samples than does the case in which we generate random numbers step by step with paths from $t=0$ to $t=T$.

We can utilize ${{p}_{PROD}}$ as a control variable to improve the accuracy in estimating the period-sum payment theoretical price $p$ according to the following formula:
\begin{equation}
	\begin{aligned}
  & \text{the estimation of the theoretical price of the period-sum payment} \\ 
 & =\text{ }\!\!~\!\!\text{ the average of the theoretical price of the period-sum payment over paths} \\ 
 & +\text{the quasi-analytical theoretical price of the period-product payment} \\ 
 & -\text{the average of the theoretical price of the period-product payment over paths} \\ 
\end{aligned}	
\end{equation}

The concept of the period-product has another use, as we cannot directly observe the instantaneous value for the type 1 performance variables; we can only observe the discrete samples of the period-sum for each accounting period. All the model parameters relate to instantaneous variables, but these cannot be directly estimated. Let $k$ be the number of observable accounting periods. Given the instantaneous variable parameters (such as volatility or correlation, among others), we generate a set of period-sum samples 
$\mathbf{D}_{SUM}^{\left( h \right)}:=\left\{ P_{iSUM}^{\left( h \right)}\left( {{t}_{1}},{{t}_{2}} \right),\cdots ,P_{iSUM}^{\left( h \right)}\left( {{t}_{k}},{{t}_{k+1}} \right) \right\}$ for $i\in {{N}_{1}}$ 
 of the $h$-th trial by the Monte Carlo method and obtain one statistic $f\left( \mathbf{D}_{SUM}^{\left( h \right)} \right)$of the $h$th trial. We repeat this procedure $H$ times and obtain the average ${{e}_{SUM}}:={\sum\nolimits_{h}{f\left( \mathbf{D}_{SUM}^{\left( h \right)} \right)}}/{H}\;$. Similarly, we obtain ${{e}_{PROD}}:={\sum\nolimits_{h}{f\left( \mathbf{D}_{PROD}^{\left( h \right)} \right)}}/{H}\;$. At the same time, we know the analytical value corresponding to ${{e}_{PROD}}$ through the above analytical distribution formula, which is denoted by ${{a}_{PROD}}$. We can use the parameter of the period-product as a control variable to improve the accuracy in estimating the period-sum parameter:
\begin{equation}
	\text{The estimation of the period-sum parameter}={{e}_{SUM}}-{{e}_{PROD}}+{{a}_{PROD}}.	
\end{equation}

{\flushleft{Thus, we determine the instantaneous variable parameters so the left side of (4.19) agrees with the real observation values.}}

Many points of caution are involved in estimating parameters. Fix arbitrarily $1\le i\le m$ and $1\le j\le n$; ${{S}_{i}}$ and ${{P}_{j}}$ are denoted as $S$ and $P$ for simplicity.
\begin{arabiclist}
\item When we estimate the correlation between the stock price and performance variable, as the observable performance variable is the period-sum of each accounting period $\left[ {{t}_{i}},{{t}_{i+1}} \right]\ \left( i=1,\cdots ,k \right)$, we must use the period-sum of the stock price in the corresponding period 
\begin{equation}
	{{S}_{SUM}}\left( {{t}_{i}},{{t}_{i+1}} \right):=\int_{{{t}_{i}}}^{{{t}_{i+1}}}{{{S}_{u}}du}\ \left( i=1,\cdots ,k \right).	
\end{equation}

{\flushleft{When the number of issued shares changes during these periods, the period-sums of the market capitalization are appropriate variables.}}
\begin{figure}[pt]
\centerline{\includegraphics[width=3.0in]{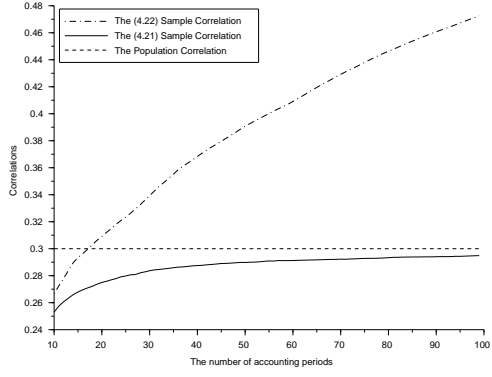}}
\vspace*{8pt}
\caption{The number of accounting periods and correlations.}
\end{figure}
\item It is correct to compute the sample correlation of the ratio's logarithm of successive two period-sums
\begin{equation}
	\left( \log \frac{{{S}_{SUM}}\left( {{t}_{i+1}},{{t}_{i+2}} \right)}{{{S}_{SUM}}\left( {{t}_{i}},{{t}_{i+1}} \right)},\log \frac{{{P}_{SUM}}\left( {{t}_{i+1}},{{t}_{i+2}} \right)}{{{P}_{SUM}}\left( {{t}_{i}},{{t}_{i+1}} \right)} \right)\ \left( i=1,\cdots ,k-1 \right),	
\end{equation}

{\flushleft{not one of the period-sum itself}}
\begin{equation}
	\left( {{S}_{SUM}}\left( {{t}_{i}},{{t}_{i+1}} \right),{{P}_{SUM}}\left( {{t}_{i}},{{t}_{i+1}} \right) \right)\ \left( i=1,\cdots ,k-1 \right).	
\end{equation}

{\flushleft{The latter correlation converges to one as the number of the accounting periods increases because the drift term eventually dominates. Fig. 1 illustrates the changes in the two sample correlations when the number of accounting periods increases, calculated by a Monte Carlo Simulation with 20,000 paths. We assume the population correlation is 0.3. Although the sample correlation of the ratio's logarithm converges to a population correlation of 0.3 as the accounting period number increases, while the sample correlation from (4.20) increases separate from the population correlation.}}

\item The sample correlation is not an unbiased estimator. The approximate value of the unbiased estimator of the population correlation is the following equation (Olkin \& Pratt 1958):
\begin{equation}
	\hat{\rho }\left( 1+\frac{\left( 1-{{{\hat{\rho }}}^{2}} \right)}{2n-6} \right),	
\end{equation}

{\flushleft{where $\hat{\rho }$ is a sample correlation and $n$ is the number of samples, which is $k-1$ in our case.}}
\item When using a stock index as the optimal portfolio, a single-regression beta of the optimal portfolio is necessary; we can use the following formula:
\begin{equation}
\label{ZEqnNum151138}
	\begin{aligned}
  & \text{the estimation of the population correlation } \\ 
  & \text{between the stock index and the performance variable} \\ 
 & \times \frac{\text{ the estimation of volatility of the performance variable}}{\text{ the estimation of volatility of the stock index}}. \\ 
\end{aligned}	
\end{equation}

{\flushleft{We must estimate the population correlation between the stock index and the performance variable carefully as the estimation of that between the stock price and the performance variable.}}
\item We use the unbiased variance of samples of the ratio's logarithm of successive two period-sums as the population variance of the performance variable:
\begin{equation}
	\left( \log \frac{{{P}_{SUM}}\left( {{t}_{i+1}},{{t}_{i+2}} \right)}{{{P}_{SUM}}\left( {{t}_{i}},{{t}_{i+1}} \right)} \right)\ \ \left( i=1\cdots ,k-1 \right).	
\end{equation}

{\flushleft{We use the square root of the obtained population variance as an estimate of the performance variable's volatility.}}
\item We estimate the drift coefficient ${{c}_{i}}$ without beta-adjusting of the type 1 performance variable, by using the actual result and the market-consensus forecast of the period-sum in the following equation:
\begin{equation}
	\begin{aligned}
  & \text{the market-consensus forecast of the period-sum of year }Y \\ 
 & \text{=the realized  period-sum of year }{{Y}_{0}}\times \exp \left( {{c}_{i}}\left( Y-{{Y}_{0}} \right) \right). \\ 
\end{aligned}	
\end{equation}

{\flushleft{If there are multiple periods of performance goals, then the term-structure of the drift coefficients can be used.}}
\end{arabiclist}

\section{Implications for Accounting Standards }

The current accounting standards for share-based payments are as follows (Stock-Based Payments, Statement of Financial Accounting Standards No. 123 (R)):
\footnote{The IFRS 2 Share-Based Payment is essentially the same.}
\begin{arabiclist}
\item Measure the fair value of the awards on a grant date as with no performance conditions. Using an equation similar to (4.10), we set the below $A$ as the fair value:
\begin{equation}
	A:={{E}^{Q}}\left[ S_{0T}^{-1}C \right]	
\end{equation}

\item The company judges whether it is probable to meet these conditions on its grant date.
\item When judged as probable, allocate the fair value proportionately over the remaining service period.
\item When judged as improbable, do not recognize compensation cost.
\item When the performance condition results are known, if the performance conditions are not achieved, already recognized compensation cost will be reversed; in the case of achievement, unrecognized compensation cost will be immediately recorded.
\end{arabiclist}

We compare the current and new standards by defining new standards using an accounting process that measures the awards' fair value with the theoretical price $p$ of (4.10), which is always recognized as compensation cost on the grant date. In calculating $p$, we use the drift of the original probability distribution, minus the beta multiplied by an optimal portfolio's excess expected returns. As the probability of goal achievement in the original distribution is less than one and an additional downward beta adjustment exists, $p$ would be smaller than $A$. Further, $p$ and $A$ only coincide when the original distribution's achievement probability is one and the beta is zero. In an extreme case, in which the original distribution's goal achievement is one or zero, no performance condition-based motivation exists; thus, we are interested in the range, such as from 30\% to 70\%, for example. In this case, $p$ can be considered considerably smaller than $A$. Further, compensation cost costs doed not change, as they are always recognized regardless of whether the performance conditions are satisfied.

For example, consider that compensation cost for $N$ years and the performance condition results will be known in the $N$th year. Compensation cost under the current standards can only change in the $N$th year. We focus on the difference between recognized compensation cost in the $N$th (${{c}_{N}}$) and $N-1$th years (${{c}_{N-1}}$). Further, ${{I}_{g}}$ denotes a variable that assumes a value of one when the goal is achieved and zero when it is not, and ${{I}_{p}}$ denotes a variable that assumes a value of one when the company judges on the grant date that goal achievement is probable, and zero when the company judges this as improbable. As total compensation cost to be recognized for $N$ years depends only on goal achievement, we have
\begin{equation}
	A{{I}_{g}}={{c}_{N}}+\left( N-1 \right){{c}_{N-1}}.
\end{equation}

{\leftline{At the same time, ${{c}_{N-1}}$ depends only on the company's judgment:}}
\begin{equation}
	{{c}_{N-1}}=\frac{A{{I}_{p}}}{N}.
\end{equation}

{\leftline{From (5.2) and (5.3), we obtain}}
\begin{equation}
	{{c}_{N}}-{{c}_{N-1}}=A{{I}_{g}}-\left( N-1 \right){{c}_{N-1}}-{{c}_{N-1}}=A{{I}_{g}}-N{{c}_{N-1}}=A\left( {{I}_{g}}-{{I}_{p}} \right).
\end{equation}

Therefore, when ${{I}_{p}}$ is 1, ${{c}_{N}}-{{c}_{N-1}}$ is $-A$ or 0, and when ${{I}_{p}}$ is 0, then ${{c}_{N}}-{{c}_{N-1}}$ is 0 or $A$. The volatility (standard deviation) of ${{c}_{N}}-{{c}_{N-1}}$ is 
\begin{equation}
	\sqrt{\alpha \left( 1-\alpha  \right)}A,
\end{equation}
{\flushleft{where $\alpha $ is the goal achievement probability.}}
We observe that the volatility of ${{c}_{N}}-{{c}_{N-1}}$ depends only on the size of $A$ and the goal achievement probability $\alpha $, and does not depend on the company's choice. 

The expected value of total compensation cost is $\alpha A$, which also depends only on $A$ and the probability of achieving the goal $\alpha $, and not on the company's choice.

The problems with the current standards are as follows:
\begin{arabiclist}
\item Volatile compensation cost: On the one hand, volatile compensation cost under the current standards is inevitable, as indicated by (5.5). On the other hand, compensation cost is constant under the new standards.
\item Over-recognition of compensation cost: In the favorable case in which goals are achieved, if the company can correctly predict this achievement and no volatility effect occurs on the income statement, the larger compensation cost is recognized under the current standards, rather than the new. If $p$ is 40\% of $A$, the current standards compensation cost is 2.5 times the new standards. Under the current standards, the only unfavorable case in which the goal is not achieved involves a cost lower than under the new standards.

\item Inconsistent accounting objectives: Consider two awards. Award X has a low probability of goal achievement and many exercisable shares. Award Y has a high probability of goal achievement and few exercisable shares. Suppose the service provider is indifferent to both awards. The purpose of measuring the award's fair value is to measure the fair value of the service, and to recognize it as compensation cost. This is in line with (ASC 718-10-10-1)
\footnote{http://guides.newman.baruch.cuny.edu/FASB\_Codification/citing}
, which notes, “The objective of accounting for transactions under share-based payment arrangements with employees is […] to recognize in the financial statements the employee services received [...] and the related cost to the entity as those services are consumed." The difference between the fair value of awards X and Y under the current standards is
\begin{equation}
	\left( {{q}_{X}}-{{q}_{Y}} \right)A
\end{equation}

{\flushleft{where ${{q}_{X\left( Y \right)}}$ is the number of exercisable shares for award X (or Y). Alternatively, the difference under the new standards is}}
\begin{equation}
	{{q}_{X}}{{p}_{X}}-{{q}_{Y}}{{p}_{Y}}=\left( {{q}_{X}}-{{q}_{Y}} \right){{p}_{X}}+{{q}_{Y}}\left( {{p}_{X}}-{{p}_{Y}} \right)
\end{equation}

{\flushleft{where ${{p}_{X\left( Y \right)}}$ is the theoretical price {\textit{per}} exercisable share of award X (or Y). The first term on the right side of (5.7) is smaller than (5.6) because ${{p}_{X}}$ is much smaller than $A$, and the second term is negative because ${{p}_{X}}$ is much smaller than ${{p}_{Y}}$. Thus, ${{q}_{X}}{{p}_{X}}-{{q}_{Y}}{{p}_{Y}}$ is much smaller than $\left( {{q}_{X}}-{{q}_{Y}} \right)A$. Ideally, the measured values for the same services will be the same regardless of the compensation scheme, and the new standards are clearly superior to the current standards.}}
\item Distorting a company's optimal award selection: A company has a bias to avoid high-risk compensation awards. If a company adopts award X and judges it as probable, the profit decreases by large compensation cost. If the company judges this as improbable, this conveys to the market that this goal will be difficult to achieve. Both outcomes are not preferable for the company. Therefore, the company has a bias to adopt the low-risk award Y. Such a dilemma is lessened in the new standards, which have a low possibility of distorting the company's decision.
\item The significant volatility of a difficult project's compensation cost: In cases that involve the development of new technologies or drugs, for example, the goal completely differs from such goals as the earnings per share growth rate, which the company can freely select as its goal and the probability of achievement from a continuum. As achieving the goal is difficult, it is necessary to increase the benefit obtained at the time of achievement. In the above example, the company has no choice other than to select the award X. Under the current standards, because $A$ is large, compensation cost is significantly volatile, as indicated by (5.5). This volatility decreases the income statement's reliability. Under the new standards, as recognized compensation cost is constant, the income statement retains its reliability.
\end{arabiclist}

When the compulsory use of the fair value method was mandated in 2004, the current standards were the only choice, as no model properly reflected performance conditions in its theoretical pricing. However, once a model appropriately reflects performance conditions, there is no reason to keep the current standards. We posit it is worthwhile to consider the new standards proposed herein as well as their improved version.

\section{Conclusions }
Although the SPPC has experienced prominent growth, theoretical pricing has not been sufficiently studied. We examined previous studies' results regarding the theoretical price of contingent claims in an incomplete market, and incorporated the concept of a marginal utility-based price. We then adopted an approach to restrict the stochastic processes to a certain class, which is nonrestrictive and hence aggregable in application. We demonstrated a need for consistent change in the probability distributions of stock prices as well as the performance variables that affect the payoff, then developed two models. The second model, which uses an optimal portfolio, is incredibly convenient and persuasive in its application, as it uses only a few parameters. We then provided a method to estimate these parameters and improve the estimation. Simultaneously, we demonstrated that the current accounting standards—specifically, the Share-Based Payment No. 123 (R)—have some defects, which our theoretical price model can greatly improve.

\appendix
\section{Karatzas-Shreve conditions}
We call the following conditions \textit{Karatzas-Shreve conditions} (Karatzas \& Shreve 1998, Assumptions 3.8.1, 3.8.2).
\begin{arabiclist}

\item The processes $r$ and $\left\| \text{\boldmath$\theta$} \right\|$ are H\"{o}lder continuous; namely, for some $K>0$ and $\rho \in \left( 0,1 \right)$ we have
\begin{equation}
	\left| r\left( {{t}_{1}} \right)-r\left( {{t}_{2}} \right) \right|\le K{{\left| {{t}_{1}}-{{t}_{2}} \right|}^{\rho }},{\text{ }}\left| \left\| \text{\boldmath$\theta$}\left( {{t}_{1}} \right) \right\|-\left\| \text{\boldmath$\theta$}\left( {{t}_{2}} \right) \right\| \right|\le K{{\left| {{t}_{1}}-{{t}_{2}} \right|}^{\rho }}	
\end{equation}

{\flushleft{for all ${{t}_{1}},{{t}_{2}}\in \left[ 0,T \right]$.}}

\item Some positive constants ${{k}_{1}},{{k}_{2}}$ exist, such that 
\begin{equation}
\label{ZEqnNum574647}
	{{k}_{1}}\le \left\| \text{\boldmath$\theta$}\left( t \right) \right\|\le {{k}_{2}},\quad \forall t\in \left[ 0,T \right].
\end{equation}
\end{arabiclist}

The utility function $U$ and the inverse function of the marginal utility $I:={{\left( {{U}'} \right)}^{-1}}$ satisfy: 
\begin{arabiclist}

\item (Polynomial growth of $I$) A constant $\gamma >0$ exists, such that
\begin{equation}
\label{ZEqnNum893452}
	I\left( y \right)\le \gamma +{{y}^{\gamma }};
\end{equation}

\item (polynomial growth of $U\left( I \right)$ A constant $\gamma >0$ exists, such that
\begin{equation}
\label{ZEqnNum296871}
	U\left( I\left( y \right) \right)\ge -\gamma -{{y}^{\gamma }},\quad \forall y\in \left( 0,\infty  \right).
\end{equation}
\end{arabiclist}

\section{An analytical distribution formula}
We assume ${{L}_{i}}=1\,{\text{ for }}\ i\in {{N}_{2}}$ to simplify the description. Let ${{N}_{0}}={{\left( i \right)}_{1\le i\le m}}$. We can illustrate that a $d$-dimensional random column vector
\begin{equation}
\label{ZEqnNum572183}
	\begin{aligned}
  & {{\left( {{\left( \log {{S}_{i}}\left( T \right) \right)}_{i\in {{N}_{0}}}},{{\left( \log {{P}_{iPROD}} \right)}_{i\in {{N}_{1}}}},{{\left( \log {{{P}_{i}}\left( {{t}_{i}}_{1} \right)}/{{{P}_{i}}\left( {{t}_{i}}_{0} \right)}\; \right)}_{i\in {{N}_{2}}}},{{\left( \log {{P}_{i}}\left( {{T}_{v}} \right) \right)}_{i\in {{N}_{3}}}} \right)}^{\top }} \\ 
 & =:\left( {{x}_{hi}} \right)_{0\le h\le 3,i\in {{N}_{h}}}^{\top }=:\mathbf{x} \\ 
\end{aligned}	
\end{equation}

\leftline{has a $d$-dimensional normal distribution.}

We use the moment-generating function of $\mathbf{x}$ to prove this. As preparation, it is necessary to analyze each random variable.

For $1\le i\le m$ and $1\le j\le d$, let ${{d}_{i}}$ be a dividend yield of the $i$th stock, ${{\sigma }_{i}}_{j}$ be the $\left( i,j \right)$th element of $\left[ \begin{matrix}
   {{\mathbf{\Sigma }}_{1}} & 0  \\
\end{matrix} \right]$, and ${{\hat{w}}_{j}}$ be the $j$th element of $\mathbf{\hat{w}}$.

For $i\in {{N}_{0}}$, we have
\begin{equation}
\label{ZEqnNum803722}
	\log {{S}_{i}}\left( T \right)={{m}_{0}}_{i}+\sum\limits_{j=1}^{d}{\int_{0}^{T}{\tilde{T }{{}_{0}}{{_{i}}_{j}}\left( t \right)d{{{\hat{w}}}_{j}}\left( t \right)}}	
\end{equation}

\leftline{where}
\begin{equation}
	{{m}_{0}}_{i}:=\log {{S}_{i}}\left( 0 \right)+\int_{0}^{T}{\left( r\left( t \right)-{{d}_{i}}\left( t \right)-\frac{1}{2}\sum\limits_{j=1}^{d}{\sigma _{ij}^{2}\left( t \right)} \right)dt}	
\end{equation}

\leftline{and}
\begin{equation}
	\tilde{T }{{}_{0}}{{_{i}}_{j}}\left( t \right):={{\sigma }_{i}}_{j}\left( t \right).	
\end{equation}

For $1\le i\le n$ and $1\le j\le d$, let ${{\mu }_{i}}$ be the $i$th element of $\text{\boldmath$\mu $}$ in (4.1), ${{T }_{i}}_{j}$ be the $\left( i,j \right)$th element of $\mathbf{T }$, and $\sigma _{i}^{2}$ be $\sum\limits_{j=1}^{d}{T _{ij}^{2}}$.

From (3.1), we have
\begin{equation}
	\log {{P}_{i}}\left( t \right)=\log {{P}_{i}}\left( 0 \right)+\int_{0}^{t}{\left( {{\mu }_{i}}\left( u \right)-\frac{1}{2}\sigma _{i}^{2}\left( u \right) \right)du}+\sum\limits_{j=1}^{d}{\int_{0}^{t}{{{T }_{i}}_{j}\left( u \right)d\hat{w}{{}_{j}}\left( u \right)}}.	
\end{equation}

For $i\in {{N}_{1}}$, substituting (B.5) into the right side of the logarithm of (4.11) yields
\begin{equation}
	\begin{aligned}
  & \log {{P}_{iPROD}}=\log \left( {{b}_{i}}-{{a}_{i}} \right){{P}_{i}}\left( 0 \right)+\frac{1}{{{b}_{i}}-{{a}_{i}}}\int_{{{a}_{i}}}^{{{b}_{i}}}{\int_{0}^{t}{\left( {{\mu }_{i}}\left( u \right)-\frac{1}{2}\sigma _{i}^{2}\left( u \right) \right)du}dt} \\ 
 & +\frac{1}{{{b}_{i}}-{{a}_{i}}}\int_{{{a}_{i}}}^{{{b}_{i}}}{\left( \sum\limits_{j=1}^{d}{\int_{0}^{t}{{{T }_{i}}_{j}\left( u \right)d\hat{w}{{}_{j}}\left( u \right)}} \right)dt}.  
\end{aligned}	
\end{equation}

{\flushleft{We then apply a generalized form of Fubini's theorem for stochastic integrals (Heath \& Morton 1992) to the random terms in (B.6) to have}}
\begin{equation}
\label{ZEqnNum954581}
\begin{aligned}
  & \int_{{{a}_{i}}}^{{{b}_{i}}}{\left( \int_{0}^{t}{{{T }_{i}}_{j}\left( u \right)d\hat{w}{{}_{j}}\left( u \right)} \right)dt}=\int_{{{a}_{i}}}^{{{b}_{i}}}{\left( {{b}_{i}}-u \right){{T }_{i}}_{j}\left( u \right)d\hat{w}{{}_{j}}\left( u \right)}+\left( {{b}_{i}}-{{a}_{i}} \right)\int_{0}^{{{a}_{i}}}{{{T }_{i}}_{j}\left( u \right)d\hat{w}{{}_{j}}\left( u \right)} \\ 
 & =\int_{0}^{T}{\left( \left( {{b}_{i}}-u \right){{T }_{i}}_{j}\left( u \right){{1}_{\left\{ {{a}_{i}}\le u\le {{b}_{i}} \right\}}}+\left( {{b}_{i}}-{{a}_{i}} \right){{T }_{i}}_{j}\left( u \right){{1}_{\left\{ 0\le u\le {{a}_{i}} \right\}}} \right)d\hat{w}{{}_{j}}\left( u \right)}.  
\end{aligned}	
\end{equation}

\leftline{Substituting (B.7) into (B.6), we have}
\begin{equation}
\label{ZEqnNum705888}
	\log {{P}_{iPROD}}={{m}_{1}}_{i}+\sum\limits_{j=1}^{d}{\int_{0}^{T}{\tilde{T }{{}_{1}}{{_{i}}_{j}}\left( t \right)d\hat{w}{{}_{j}}\left( t \right)}},	
\end{equation}

\leftline{where}
\begin{equation}
\label{ZEqnNum436749}
	{{m}_{1}}_{i}:=\log \left( {{b}_{i}}-{{a}_{i}} \right){{P}_{i}}\left( 0 \right)+\frac{1}{{{b}_{i}}-{{a}_{i}}}\int_{{{a}_{i}}}^{{{b}_{i}}}{\left( \int_{0}^{t}{\left( {{\mu }_{i}}\left( u \right)-\frac{1}{2}\sigma _{i}^{2}\left( u \right) \right)du} \right)dt}	
\end{equation}

\leftline{and}
\begin{equation}
	\tilde{T }{{}_{1}}{{_{i}}_{j}}\left( t \right):=\frac{\left( \left( {{b}_{i}}-t \right){{T }_{i}}_{j}\left( t \right){{1}_{\left\{ {{a}_{i}}\le t\le {{b}_{i}} \right\}}}+\left( {{b}_{i}}-{{a}_{i}} \right){{T }_{i}}_{j}\left( t \right){{1}_{\left\{ 0\le t\le {{a}_{i}} \right\}}} \right)}{{{b}_{i}}-{{a}_{i}}}.	
\end{equation}

For $i\in {{N}_{2}}$, from (B.5), we have 
\begin{equation}
	\log \frac{{{P}_{i}}\left( {{t}_{i_{1}}} \right)}{{{P}_{i}}\left( {{t}_{i_{0}}} \right)}={{m}_{2i}}+\sum\limits_{j=1}^{d}{\int_{0}^{T}{\tilde{T }{{}_{2}}{{_{i}}_{j}}\left( t \right)d\hat{w}{{}_{j}}\left( t \right)}},	
\end{equation}

\leftline{where }
\begin{equation}
	{{m}_{2i}}:=\int_{{{t}_{i_{0}}}}^{{{t}_{i_{1}}}}{\left( {{\mu }_{i}}\left( t \right)-\frac{1}{2}\sigma _{i}^{2}\left( t \right) \right)dt}	
\end{equation}

\leftline{and}
\begin{equation}
	\tilde{T }{{}_{2}}{{_{i}}_{j}}\left( t \right):={{T }_{i}}_{j}\left( t \right){{1}_{\left\{ {{t}_{i}}_{0}\le t\le {{t}_{i}}_{1} \right\}}}.	
\end{equation}

For $i\in {{N}_{3}}$, from (B.5), we have
\begin{equation}
	\log {{P}_{i}}\left( {{T}_{v}} \right)={{m}_{3}}_{i}+\sum\limits_{j=1}^{d}{\int_{0}^{T}{\tilde{T }{{}_{3}}{{_{i}}_{j}}\left( t \right)d\hat{w}{{}_{j}}\left( t \right)}},	
\end{equation}

\leftline{where}
\begin{equation}
	{{m}_{3i}}:=\log {{P}_{i}}\left( 0 \right)+\int_{0}^{{{T}_{v}}}{\left( {{\mu }_{i}}\left( t \right)-\frac{1}{2}\sigma _{i}^{2}\left( t \right) \right)dt},	
\end{equation}

\leftline{and}
\begin{equation}
	\tilde{T }{{}_{3}}{{_{i}}_{j}}\left( t \right):={{T }_{3}}{{_{i}}_{j}}\left( t \right){{1}_{\left\{ 0\le t\le {{T}_{v}} \right\}}}.	
\end{equation}

Let $\text{\boldmath$\theta$}:={{\left( {{\left( {{\theta }_{0}}_{i} \right)}_{1\le i\le m}},{{\left( {{\theta }_{1}}_{i} \right)}_{i\in {{N}_{1}}}},{{\left( {{\theta }_{2}}_{i} \right)}_{i\in {{N}_{2}}}},{{\left( {{\theta }_{3}}_{i} \right)}_{i\in {{N}_{3}}}} \right)}^{\top }}$ be a column vector, which is the coefficient in the moment-generating function of $\mathbf{x}$. The moment-generating function $M\left( \text{\boldmath$\theta$} \right)$ of $\mathbf{x}$ is 
\begin{equation}
\label{ZEqnNum212455}
	\begin{aligned}
  M\left( \text{\boldmath$\theta$} \right)=& {{E}^{Q}}\left[ \exp \left( {{\text{\boldmath$\theta$}}^{\top }}\mathbf{x} \right) \right] \\ 
 =& \exp \left( \sum\limits_{h,i\in {{N}_{h}}}{{{\theta }_{h}}_{i}{{m}_{hi}}} \right)\prod\limits_{1\le j\le d}{{{E}^{Q}}\left[ \exp \left\{ \int_{0}^{T}{\sum\limits_{h,i\in {{N}_{h}}}{{{\theta }_{h}}_{i}\tilde{T }{{}_{hi}}_{j}\left( t \right)}d{{{\hat{w}}}_{j}}\left( t \right)} \right\} \right]} \\ 
 =& \exp \left( \sum\limits_{h,i\in {{N}_{h}}}{{{\theta }_{h}}_{i}{{m}_{hi}}} \right)\prod\limits_{1\le j\le d}{\exp \left\{ \frac{1}{2}\int_{0}^{T}{{{\left( \sum\limits_{h,i\in {{N}_{h}}}{{{\theta }_{h}}_{i}\tilde{T }{{}_{hi}}_{j}}\left( t \right) \right)}^{2}}dt} \right\}} \\ 
 =& \exp \left( \sum\limits_{h,i\in {{N}_{h}}}{{{\theta }_{h}}_{i}{{m}_{hi}}} \right)\\
 &    \times \prod\limits_{1\le j\le d}{\exp \left\{ \frac{1}{2}\int_{0}^{T}{\left( \sum\limits_{h,i\in {{N}_{h}}}{\sum\limits_{{h}',{i}'\in {{N}_{{{h}'}}}}{{{\theta }_{h}}_{i}{{\theta }_{{{h}'}}}_{{{i}'}}\tilde{T }{{}_{hi}}_{j}\left( t \right)\tilde{T }{{}_{{h}'{i}'}}_{j}\left( t \right)}} \right)dt} \right\}} \\ 
 =& \exp \left( \sum\limits_{h,i\in {{N}_{h}}}{{{\theta }_{h}}_{i}{{m}_{hi}}} \right)\\
 &  \times \exp \left\{ \frac{1}{2}\sum\limits_{j=1}^{d}{\left( \int_{0}^{T}{\left( \sum\limits_{h,i\in {{N}_{h}}}{\sum\limits_{{h}',{i}'\in {{N}_{{{h}'}}}}{{{\theta }_{h}}_{i}{{\theta }_{{{h}'}}}_{{{i}'}}\tilde{T }{{}_{hi}}_{j}\left( t \right)\tilde{T }{{}_{{h}'{i}'}}_{j}\left( t \right)}} \right)dt} \right)} \right\} \\ 
 =& \exp \left( \sum\limits_{h,i\in {{N}_{h}}}{{{\theta }_{h}}_{i}{{m}_{hi}}} \right)\\
 &  \times \exp \left\{ \frac{1}{2}\sum\limits_{h,i\in {{N}_{h}}}{\sum\limits_{{h}',{i}'\in {{N}_{{{h}'}}}}{{{\theta }_{h}}_{i}{{\theta }_{{{h}'}}}_{{{i}'}}\int_{0}^{T}{\left( \sum\limits_{j=1}^{d}{\tilde{T }{{}_{hi}}_{j}\left( t \right)\tilde{T }{{}_{{h}'{i}'}}_{j}\left( t \right)} \right)dt}}} \right\}.  
\end{aligned}
\end{equation}

\leftline{We used the moment-generating function of a normal random variable for the third}
\leftline{equality of (B.17) (Shreve 2004, 4.4.30).}

The last side in (B.17) indicates $\mathbf{x}$ has a $d$-dimensional normal distribution, which has means of ${{\left( {{m}_{hi}} \right)}_{0\le h\le 3,i\in {{N}_{h}}}}$ and covariances between for $i\in {{N}_{h}}$ $\left( 0\le h\le 3 \right)$  and for ${i}'\in {{N}_{{{h}'}}}$ $\left( 0\le {h}'\le 3 \right)$ of $\int_{0}^{T}{\sum\nolimits_{j}{\tilde{T }{{}_{hi}}_{j}\left( t \right)\tilde{T }{{}_{{h}'{i}'}}_{j}\left( t \right)}dt}$. We can express every covariance using the instantaneous covariances. For example, we have
\begin{equation}
\begin{aligned}
&\int_{0}^{T}{\sum\nolimits_{j}{\tilde{T }{{}_{0i}}_{j}\left( t \right)\tilde{T }{{}_{1k}}_{j}\left( t \right)}dt}
\\
&=\frac{\int_{{{a}_{k}}}^{{{b}_{k}}}{\left( {{b}_{i}}-t \right)\sum\nolimits_{j}{{{\sigma }_{i}}_{j}\left( t \right){{T }_{k}}_{j}\left( t \right)}dt}+\left( {{b}_{k}}-{{a}_{k}} \right)\int_{0}^{{{a}_{k}}}{\sum\nolimits_{j}{{{\sigma }_{i}}_{j}\left( t \right){{T }_{k}}_{j}\left( t \right)}dt}}{{{b}_{k}}-{{a}_{k}}},
\end{aligned}
\end{equation}

\leftline{where $\sum\nolimits_{j}{{{\sigma }_{i}}_{j}\left( t \right){{T }_{k}}_{j}\left( t \right)}$ is an instantaneous covariance between ${d{{S}_{i}}}/{{{S}_{i}}}\;$and ${d{{P}_{k}}}/{{{P}_{k}}}\;$.}

\end{document}